\newcommand\vldbdoi{XX.XX/XXX.XX}
\newcommand\vldbpages{XXX-XXX}
\newcommand\vldbvolume{16}
\newcommand\vldbissue{3}
\newcommand\vldbyear{2022}
\newcommand\vldbauthors{\authors}
\newcommand\vldbtitle{\shorttitle} 
\newcommand\vldbpagestyle{empty} 
\newcolumntype{M}[1]{>{\arraybackslash}m{#1}}
\definecolor{LightGray}{gray}{0.9}
\newtheorem{definition}{Definition}[section]
\newtheorem{theorem}{Theorem}[section]
\newtheorem{lemma}[theorem]{Lemma}
\newtheorem{property}[theorem]{Property}
\newtheorem{example}{Example}[section]
\renewcommand\footnotetextcopyrightpermission[1]{}
\begin{document}

\newcommand{\cal}[1]{\mathcal{#1}}
\newcommand{\byronnotes}[1]{\textcolor{blue}{\noindent Note: #1}}
\newcommand{\choi}[1]{\textcolor{blue}{#1}}
\newcommand{\jiaxin}[1]{{#1}}
\newcommand{\TODO}[1]{\textcolor{BurntOrange}{\textbf{TODO:} #1}}
\newcommand{\Remind}[1]{\textcolor{RubineRed}{\textbf{Reminder:}#1}}
\newcommand{\byronsuggestion}[1]{\textcolor{Green}{\textbf{Reminder:}#1}}
\newcommand{\red}[1]{\textcolor{red}{#1}}
\newcommand{\what}[1]{\textcolor{blue}{?#1?}}
\newcommand{\eat}[1]{}
\newcommand{\tr}[1]{}
\newcommand{\kw}[1]{{\ensuremath {\textsf{#1}}}\xspace}
\newenvironment{tbi}{\begin{itemize}
		\setlength{\topsep}{0.6ex}\setlength{\itemsep}{0ex}} 
	{\end{itemize}} 
\newcommand{\ei}{\end{itemize}}

\newcommand{\Gontology}{G_{Ont}}
\newcommand{\PRADS}{\textsf{\small PADS}}
\newcommand{\KPADS}{\textsf{\small KPADS}}
\newcommand{\BPADS}{\textsf{\small BPADS}}
\newcommand{\ADS}{\textsf{\small ADS}}
\newcommand{\PKD}{\textsf{\small PKD}}
\newcommand{\FRAMEWORK}{\kw{FRAMEWORK\_NAME}}
\newcommand{\ALGO}{\kw{ALGO\_NAME}}
\newcommand{\PEval}{\kw{PEval}}
\newcommand{\IncEval}{\kw{IncEval}}
\newcommand{\Assemble}{\kw{Assemble}}
\newcommand{\ARef}{\textsf{\small ARefine}}
\newcommand{\ACmpl}{\textsf{\small AComplete}}
\newcommand{\DataGraph}{$G$}
\newcommand{\Ghier}{\mathbb{G}}
\newcommand{\qhier}{\mathbb{Q}}
\newcommand{\subclassOf}{\mathsf{SubClassOf}}
\newcommand{\subtypeOf}{\mathsf{SubTypeOf}}
\newcommand{\answer}{\mathsf{ans}}
\newcommand{\answerb}{\mathsf{ans}^b}
\newcommand{\answerf}{\mathsf{ans}^f}
\newcommand{\Bisim}{\mathsf{Bisim}}
\newcommand{\rank}{\mathsf{rank}}
\newcommand{\dist}{\mathsf{dist}}
\newcommand{\Next}{\mathsf{next}}
\newcommand{\distance}{\mathsf{d}}
\newcommand{\reach}{\mathsf{reach}}
\newcommand{\Generalization}{\mathsf{Gen}}
\newcommand{\Specialization}{\mathsf{Spec}}
\newcommand{\desc}{\mathsf{desc}}
\newcommand{\support}{\mathsf{sup}}
\newcommand{\distort}{\mathsf{DT}}
\newcommand{\degree}{\mathsf{deg}}
\renewcommand{\equiv}{\mathsf{equiv}}
\newcommand{\equivv}[1]{[#1]_\mathsf{equiv}}
\newcommand{\irchi}[2]{\raisebox{\depth}{$#1\chi$}}
\newcommand{\Summarization}{\mathpalette\irchi\relax}
\newcommand{\Configuration}{C}
\newcommand{\radius}{d_{max}}
\newcommand{\Eval}{\mathsf{eval}}
\newcommand{\peval}{\mathsf{peval}}
\newcommand{\F}{{\cal F}}
\newcommand{\Score}{\mathsf{scr}}
\newcommand{\ie}{\emph{i.e.,}\xspace}
\newcommand{\eg}{\emph{e.g.,}\xspace}
\newcommand{\wrt}{\emph{w.r.t.}\xspace}
\newcommand{\aka}{\emph{a.k.a.}\xspace}

\newcommand{\equi}{\mathsf{equi}}
\newcommand{\sn}{\mathsf{sn}}
\newcommand{\METIS}{\mathsf{\small METIS}}

\newcommand{\maxsf}{\mathsf{max}}
\newcommand{\PIE}{\mathsf{PIE}}
\newcommand{\PINE}{\mathsf{PINE}}

\newcommand{\threshold}{\tau}

\newcommand{\cost}{\mathsf{cost}}
\newcommand{\costq}{\mathsf{cost}_\mathsf{q}}
\newcommand{\CR}{\mathsf{compress}}
\newcommand{\DT}{\mathsf{distort}}
\newcommand{\FP}{\mathsf{fp}}
\newcommand{\maxSAT}{\mathsf{maxSAT}}
\newcommand{\True}{\mathsf{T}}
\newcommand{\False}{\mathsf{F}}
\newcommand{\OptGen}{\mathsf{OptGen}}
\newcommand{\freq}{\mathsf{freq}}

\newcommand{\match}{\mathsf{match}}

\newcommand{\vpd}{V_{pd}}
\newcommand{\vtp}{V_{tbp}}
\newcommand{\BiGindex}{\mathsf{BiG\textnormal{-}index}}
\newcommand{\filter}{\mathsf{filter}}
\newcommand{\ans}{\mathsf{ans\_graph\_gen}}
\newcommand{\Azero}{\mathbb{A}}
\newcommand{\boost}{\mathsf{boost}}
\newcommand{\bkws}{\mathsf{bkws}}
\newcommand{\fkws}{\mathsf{fkws}}
\newcommand{\rkws}{\mathsf{rkws}}
\newcommand{\dkws}{\mathsf{dkws}}
\newcommand{\knk}{\mathsf{knk}}
\newcommand{\ksp}{\mathsf{ksp}}
\newcommand{\config}{\mathsf{config}}
\newcommand{\content}{\mathsf{isKey}}
\newcommand{\pcnt}{\mathsf{pcnt}}
\newcommand{\private}{\textsf{isPrivate}}

\newcommand{\Path}{{\mathcal{P}}}
\renewcommand{\P}{{\mathcal P}}

\newcommand{\ppkws}{\textsf{\small PPKWS}\xspace}

\newcommand{\DKWS}{\textsf{\small DKWS}\xspace}
\newcommand{\kDKWS}{\textsf{\small $k$DKWS}\xspace}
\newcommand{\SKWS}{\textsf{\small BFKWS}\xspace}
\newcommand{\KWS}{\textsf{\small KWS}\xspace}
\newcommand{\VU}{\mathbb{V}}
\newcommand{\VI}{\mathcal{V}}
\newcommand{\VM}{\bar{\mathcal{V}}}
\newcommand{\vsf}{\mathsf{v}}
\newcommand{\usf}{\mathsf{u}}
\newcommand{\answerset}{\mathcal{A}}
\newcommand{\candanswerset}{\bar{\mathcal{A}}}
\newcommand{\prune}{S}
\newcommand{\Ud}{\hat{\dist}}
\newcommand{\Ld}{\check{\dist}}
\newcommand{\invert}{\mathscr{I}}
\newcommand{\MB}{u.b}
\newcommand{\MF}{\mathsf{f}}
\newcommand{\Queue}{\mathcal{P}}
\newcommand{\Visit}{\mathsf{Vis}}
\newcommand{\invertV}{V_{\invert}}
\newcommand{\invertE}{E_{\invert}}
\newcommand{\tnormal}[1]{\textnormal{#1}}

\newcommand{\DKWSBF}{\textsf{\small BF}\xspace}
\newcommand{\DKWSNP}{\textsf{\small BF+PADS+NP}\xspace}
\newcommand{\DKWSPADS}{\textsf{\small BF+PADS}\xspace}
\newcommand{\DKWSPINE}{\textsf{\small BF+ALL}\xspace}
\newcommand{\Notify}{\mathsf{Notify}}
\newcommand{\Push}{\mathsf{Push}}
\newcommand{\Parameters}{\mathscr{X}}
\newcommand{\grape}{\kw{GRAPE}}
\newcommand{\Buffer}{\mathbb{B}}
\newcommand{\SI}{\kw{SI}}

\newcommand{\SBGindex}{\mathsf{SBGIndex}}

\newcommand{\SGIndex}{\mathsf{SGIndex}} 

\newcommand{\Portal}{\mathbb{P}}
\newcommand{\rclique}{\mathsf{r\textnormal{-}clique}}
\newcommand{\Blinks}{\mathsf{Blinks}}
\newcommand{\Rclique}{\mathsf{Rclique}}

\newcommand{\pprclique}{\mathsf{PP\textnormal{-}r\textnormal{-}clique}}
\newcommand{\ppknk}{\mathsf{PP\textnormal{-}knk}}
\newcommand{\ppBlinks}{\mathsf{PP\textnormal{-}Blinks}}

\newcommand{\baselinerclique}{\mathsf{Baseline\textnormal{-}r\textnormal{-}clique}}
\newcommand{\baselineknk}{\mathsf{Baseline\textnormal{-}knk}}
\newcommand{\baselineBlinks}{\mathsf{Baseline\textnormal{-}Blinks}}

\newcommand{\stitle}[1]{\vspace{0.4ex}\noindent{\bf #1}}
\newcommand{\etitle}[1]{\vspace{0.8ex}\noindent{\underline{\em #1}}}
\newcommand{\eetitle}[1]{\vspace{0.6ex}\noindent{{\em #1}}}

%
\newcommand{\techreport}[2]{#2}
\newcommand{\SGFrame}{\mathsf{SGFrame}} 

\newcommand{\stab}{\rule{0pt}{8pt}\\[-2.0ex]}
\newcommand{\tab}{\hspace{4ex}}

\newcommand{\Q}{{\cal Q}}



\newcommand{\eop}{\hspace*{\fill}\mbox{\qed}}








\newcommand{\Src}{S}
\newcommand{\Dst}{T}
\newcommand{\SD}{At least $k$ $\mathsf{S}$-$\mathsf{T}$ maximum-flow}
\newcommand{\SDMF}{$\mathsf{kSTMF}$}
\newcommand{\SDMFG}{$\mathsf{kSTMF}^g$}
\newcommand{\TEMSDMF}{$\mathsf{TEM}$-$\mathsf{kSTMF}$}
\newcommand{\STcore}{$\mathsf{ST}$-FCore}
\newcommand{\Core}{\mathsf{Core}}
\newcommand{\MFlow}{\mathsf{MaxFlow}}
\newcommand{\MFavg}{MF}
\newcommand{\DKS}{\mathsf{DkS}}
\newcommand{\CBB}{C^T}
\newcommand{\fT}{f^T}
\newcommand{\tsf}{\mathsf{t}}
\newcommand{\csf}{\mathsf{c}}
\newcommand{\fsf}{\mathsf{f}}
\newcommand{\Tsf}{\mathsf{T}_{max}}
\newcommand{\algo}{\mathsf{algo}}
\newcommand{\FnDense}{\mathsf{FnDense}}
\newcommand{\FnSparse}{\mathsf{FnSparse}}
\newcommand{\Transform}{\hat{G}}
\newcommand{\TFNet}{$\mathsf{TF}$-$\mathsf{Network}$}
\newcommand{\RTFNet}{$\mathsf{RTF}$-$\mathsf{Network}$}
\newcommand{\Baseline}{\mathsf{Baseline}}
\newcommand{\Greedy}{\mathsf{Greedy}}
\newcommand{\BWCC}{$\mathsf{WCC}$}
\newcommand{\Tran}{\mathsf{T}}
\newcommand{\ts}{\tau}
\newcommand{\name}{\red{\mathsf{STMflow}}}
\newcommand{\FF}{$\mathsf{Flow}$-$\mathsf{Force}$}


\newcommand{\Spade}{$\mathsf{Spade}$}

\newcommand{\Seq}{O}
\newcommand{\Grab}{\mathsf{Grab}}
\newcommand{\DENG}{\mathsf{DG}}
\newcommand{\DENGW}{\mathsf{DW}}
\newcommand{\Fraudar}{\mathsf{FD}}
\newcommand{\IncDENG}{$\mathsf{IncDG}$}
\newcommand{\IncDENGW}{$\mathsf{IncDW}$}
\newcommand{\IncFraudar}{$\mathsf{IncFD}$}
\newcommand{\IncDENGU}{$\mathsf{IncDGG}$}
\newcommand{\IncDENGWU}{$\mathsf{IncDWG}$}
\newcommand{\IncFraudarU}{$\mathsf{IncFDG}$}
\newcommand{\permutation}{\alpha}
\newcommand{\TCal}{$\mathcal{T}$}
\newcommand{\TCald}{$\mathcal{T}^d$}
\newcommand{\AFF}{$G_{\mathcal{T}}$}
\newcommand{\latency}{\mathcal{L}}
\newcommand{\Elapsed}{\mathcal{E}}
\newcommand{\Ratio}{\mathcal{R}}

\newcommand{\DG}{$\mathbf{DG}$}
\newcommand{\DW}{$\mathbf{DW}$}
\newcommand{\FD}{$\mathbf{FD}$}

\newtheorem{manualtheoreminner}{Theorem}
\newenvironment{manualtheorem}[1]{%
  \renewcommand\themanualtheoreminner{#1}%
  \manualtheoreminner
}{\endmanualtheoreminner}

\newtheorem{manuallemmainner}{Lemma}
\newenvironment{manuallemma}[1]{%
  \renewcommand\themanuallemmainner{#1}%
  \manuallemmainner
}{\endmanualtheoreminner}

\newcommand*\circled[1]{\tikz[baseline=(char.base)]{
            \node[shape=circle,draw,inner sep=2pt] (char) {#1};}}



\title{Spade: A Real-Time Fraud Detection Framework on Evolving Graphs (Complete Version)}

\author{Jiaxin Jiang}
\affiliation{
  \institution{National University of Singapore}
}
\email{jxjiang@nus.edu.sg}

\author{Yuan Li}
\affiliation{
  \institution{National University of Singapore}
}
\email{li.yuan@u.nus.edu}

\author{Bingsheng He}
\affiliation{
  \institution{National University of Singapore}
}
\email{hebs@comp.nus.edu.sg}

\author{Bryan Hooi}
\affiliation{
  \institution{National University of Singapore}
}
\email{bhooi@comp.nus.edu.sg}

\author{Jia Chen}
\affiliation{
  \institution{GrabTaxi Holdings}
}
\email{jia.chen@grab.com}

\author{Johan Kok Zhi Kang}
\affiliation{
  \institution{GrabTaxi Holdings}
}
\email{johan.kok@grabtaxi.com}

\begin{abstract}
Real-time fraud detection is a challenge for most financial and electronic commercial platforms. To identify fraudulent communities, $\Grab$, one of the largest technology companies in Southeast Asia, forms a graph from a set of transactions and detects dense subgraphs arising from abnormally large numbers of connections among fraudsters. Existing dense subgraph detection approaches focus on static graphs without considering the fact that transaction graphs are highly dynamic. Moreover, detecting dense subgraphs from scratch with graph updates is time consuming and cannot meet the real-time requirement in industry. To address this problem, we introduce an incremental real-time fraud detection framework called \Spade{}\eat{which is deployed in $\Grab$}. \Spade{} can detect fraudulent communities in hundreds of microseconds on million-scale graphs by incrementally maintaining dense subgraphs. Furthermore, \Spade{} supports batch updates and 
edge grouping to reduce response latency. Lastly, \Spade{} provides simple but expressive APIs for the design of evolving fraud detection semantics. Developers plug their customized suspiciousness functions into \Spade{} which incrementalizes their semantics without recasting their algorithms. Extensive experiments show that \Spade{} detects fraudulent communities in real time on million-scale graphs. Peeling algorithms incrementalized by \Spade{} are up to a million times faster than the static version.
\end{abstract}


\maketitle

\pagestyle{\vldbpagestyle}
\begingroup\small\noindent\raggedright\textbf{PVLDB Reference Format:}\\
\vldbauthors. \vldbtitle. PVLDB, \vldbvolume(\vldbissue): \vldbpages, \vldbyear.\\
\href{https://doi.org/\vldbdoi}{doi:\vldbdoi}
\endgroup
\begingroup
\renewcommand\thefootnote{}\footnote{\noindent
This work is licensed under the Creative Commons BY-NC-ND 4.0 International License. Visit \url{https://creativecommons.org/licenses/by-nc-nd/4.0/} to view a copy of this license. For any use beyond those covered by this license, obtain permission by emailing \href{mailto:info@vldb.org}{info@vldb.org}. Copyright is held by the owner/author(s). Publication rights licensed to the VLDB Endowment. \\
\raggedright Proceedings of the VLDB Endowment, Vol. \vldbvolume, No. \vldbissue\ %
ISSN 2150-8097. \\
\href{https://doi.org/\vldbdoi}{doi:\vldbdoi} \\
}\addtocounter{footnote}{-1}\endgroup

\section{Introduction}\label{sec:intro}

Graphs have been found in many emerging applications, including transaction networks, communication networks and social networks. The dense subgraph problem is first studied in~\cite{goldberg1984finding} and is effective for link spam identification~\cite{gibson2005discovering,beutel2013copycatch}, community detection~\cite{dourisboure2007extraction,chen2010dense} and fraud detection~\cite{hooi2016fraudar,chekuri2022densest,shin2016corescope}. Standard peeling algorithms~\cite{tsourakakis2015k,hooi2016fraudar,bahmani2012densest,chekuri2022densest,boob2020flowless} iteratively peel the vertex that has the smallest connectivity (\eg vertex degree or sum of the weights of the adjacent edges) to the graph. Peeling algorithms are widely used because of their efficiency, robustness, and theoretical worst-case guarantee. However, existing peeling algorithms~\cite{hooi2016fraudar,tsourakakis2015k,charikar2000greedy} assume a static graph without considering the fact that social and transaction graphs in online marketplaces are rapidly evolving in recent years. One possible solution for fraud detection on evolving graphs is to perform peeling algorithms periodically. We take $\Grab$'s fraud detection pipeline as an example.

\eat{There have been quite a lot of techniques \eg \cite{hooi2016fraudar,beutel2013copycatch,shin2016corescope,kumar1999trawling,ren2021ensemfdet,khuller2009finding, tsourakakis2013denser,ma2020efficient} proposed for detecting dense subgraphs.}

\begin{figure}[tb]
    \includegraphics[width=0.7\linewidth]{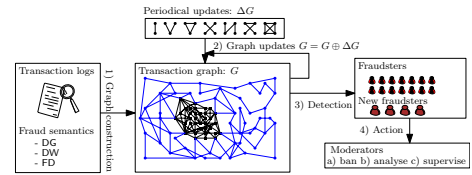}
    \caption{$\mathsf{\mathbf{Grab}}$'s data pipeline for fraud detection}\label{fig:pipeline}
\end{figure}

\begin{figure}[tb]
    \includegraphics[width=0.8\linewidth]{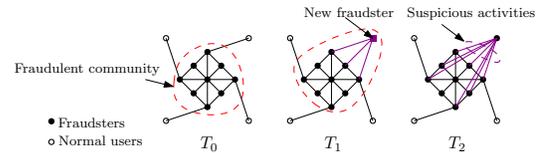}
    \caption{An example of fraud detection on evolving graphs}\label{fig:intro}
\end{figure}

\eat{However, existing works~\cite{hooi2016fraudar,tsourakakis2015k,charikar2000greedy} are proposed for static graphs. When graphs change, we have to detect dense subgraphs from scratch, which cannot meet the real-time requirement.}

\eat{Our experiments show that hundreds of new transactions are generated each second on average in 2021.}

\eat{
\stitle{Evolving graphs}.   Although peeling algorithms are efficient, identifying the dense subgraph still takes several minutes. Therefore, given graph updates, the detection of dense subgraphs from scratch cannot meet the real-time requirement. 
}

\stitle{Fraud detection pipeline in $\Grab$ (Figure~\ref{fig:pipeline})}. $\Grab$ is one of the largest technology companies in Southeast Asia and offers digital payments and food delivery services. On the $\Grab$'s e-commerce platform, 1) the transactions form a transaction graph $G$. 2) $\Grab$ updates the transaction graphs periodically $G = G\oplus \Delta G$. Our experiments show that it takes $28$s to carry out $\mathsf{Fraudar}$ ($\Fraudar{}$)~\cite{hooi2016fraudar} on a transaction graph with $6$M vertices and $25$M edges. Therefore, we can execute fraud detection every $30$ seconds. 3) The dense subgraph detection algorithm and its variants are used to detect fraudulent communities. 4) After identifying the fraudsters, the moderators ban or freeze their accounts to avoid further economic loss. A classic fraud example is customer-merchant collusion. Assume that $\Grab$ provides promotions to new customers and merchants. However, fraudsters create a set of fake accounts and do fictitious trading to use the opportunity of promotion activities to earn the bonus. Such fake accounts (vertices) and the transactions among them (edges) form a dense subgraph.

\setlist{nolistsep}
\begin{example}
    Consider the transaction graph in Figure~\ref{fig:intro}, where a vertex is a user or a store, and an edge represents a transaction. Suppose a fraudulent community is identified at time $T_0$ and a normal user becomes a fraudster and participates in suspicious activities at $T_1$. Applying peeling algorithms at $T_1$, the new fraudster is detected at $T_2$. However, many new suspicious activities have occurred during the time period $[T_1,T_2]$ that could cause huge economic losses.
\end{example}

As reported in recent studies~\cite{dailyreport,ye2021gpu}, $21.4\%$ of the traffic to e-commerce portals are malicious bots in 2018. Fraud detection is challenging since many fraudulent activities occur in a very short timespan. Hence, identifying fraudsters and reducing response latency to fraudulent transactions are key tasks in real-time fraud detection.

\begin{table}[tb]
        \caption{Comparison of \Spade{} and previous algorithms}\label{table:comparison}
    \centering
   \begin{scriptsize}
    \begin{tabular}{|c|c|c|c|c|c|}
      \hline
       & $\DENG$~\cite{charikar2000greedy} & $\DENGW$~\cite{gudapati2021search}  &  $\Fraudar$~\cite{hooi2016fraudar} & \Spade  \\
      \hline
      Dense subgraph detection & \checkmark & \checkmark  & \checkmark & \checkmark    \\
      \hline
      Accuracy guarantees & \checkmark & \checkmark & \checkmark & \checkmark \\
      \hline
      Weighted graph & \ding{55} & \checkmark &   \checkmark & \checkmark \\  
      \hline
      Incremental updates & \ding{55} & \ding{55} & \ding{55} & \checkmark \\
        \hline
      Edge reordering & \ding{55} & \ding{55} & \ding{55} & \checkmark\\
      \hline
      
    \end{tabular}
    \end{scriptsize}
\end{table}

To address real-time fraud detection on evolving graphs, a better solution would be to incrementally maintain dense subgraphs. There are two main challenges of incremental maintenance. First, \eat{real-time fraud detection is necessary as fraud detection is an essential path to respond to customer's requests.} operational demands require that fraudsters should be identified in $100$ milliseconds in industry. Maintaining the dense subgraph incrementally in such a short timespan is challenging. Second, fraud semantics continue to evolve and it is not trivial to incrementalize each of them. Implementing a correct and efficient incremental algorithm is, in general, a challenge. It is impractical to train all developers with the knowledge of incremental graph evaluation. To the best of our knowledge, there are no generic approaches to minimize the cost of incremental peeling algorithms. Motivated by the challenges, we design a real-time fraud detection framework, named \Spade{} to detect fraudulent communities by incrementally maintaining dense subgraphs. The comparison between \Spade{} and the previous algorithms (\jiaxin{dense subgraphs ($\DENG$)~\cite{charikar2000greedy}, dense weighted subgraph ($\DENGW$)~\cite{gudapati2021search} and $\mathsf{Fraudar}$ ($\Fraudar$)~\cite{hooi2016fraudar}}) is summarized in Table~\ref{table:comparison}.

\stitle{Contributions.} In this paper, we focus on incremental peeling algorithms. In summary, this paper makes the following contributions.

\begin{enumerate}
    \item We build three fundamental incremental techniques for peeling algorithms to avoid detecting fraudulent communities from scratch. \Spade{} inspects the subgraph that is affected by graph updates and reorders the peeling sequence incrementally, which theoretically guarantees the accuracy of the worst case.
    \item \Spade{} enables developers to design their fraud semantics to detect fraudulent communities by providing the suspiciousness functions of edges and vertices. We show that a variety of peeling algorithms can be incrementalized in \Spade{} (Section~\ref{sec:framework}) including $\DENG$, $\DENGW$ and $\Fraudar$.
    \item We conduct extensive experiments on \Spade{} with datasets from industry. The results show that \Spade{} speeds up fraud detection up to $6$ orders of magnitude since \Spade{} minimizes the cost of incremental maintenance by inspecting the affected area. Furthermore, the latency of the response to fraud activities can be significantly reduced. Lastly, once a user is spotted as a fraudster, we identify the related transactions as potential fraud transactions and pass them to system moderators. Up to $88.34\%$ potential fraud transactions can be prevented. 
\end{enumerate}

\stitle{Organization.} The rest of this paper is organized as follows: Section~\ref{sec:background} presents the background and the problem statement. We introduce the framework of \Spade{} in Section~\ref{sec:framework} and three incremental peeling algorithms in Section~\ref{sec:Spade}. Section~\ref{sec:exp} reports on the experimental evaluation. After reviewing related work in Section~\ref{sec:related}, we conclude in Section~\ref{sec:conclusion}.

\section{Background}\label{sec:background}

\subsection{Preliminary}

\eat{
We next introduce some basic notations. Some frequently used notations are summarized in Table~\ref{tab-notations}.
}

\begin{table}[tb!]
	  \caption{Frequently used notations}
	  \setlength{\tabcolsep}{0.5em}
	\label{tab-notations}
	\begin{scriptsize}
		\begin{center}
			\begin{tabular}
				{|c|c|} \hline Notation & Meaning \\
				\hline
                                \hline
                $G$ / $\Delta G$ & a transaction graph / updates to graph $G$ \\ \hline
				$G\oplus \Delta G$ & the graph obtained by updating $\Delta G$ to $G$ \\ \hline
				$a_i$ / $c_{ij}$ & the weight on vertex $u_i$ / on edge $(u_i,u_j)$ \\ \hline
				$f(S)$ & the sum of the suspiciousness of induced subgraph $G[S]$ \\ \hline
				$g(S)$ & the suspiciousness density of  vertex set $S$ \\ \hline
				$w_{u}(S)$ & peeling weight, \ie the decrease in $f$ by removing $u$ from $S$ \\ \hline
				$Q$ & a peeling algorithm\\ \hline
				$\Seq$ & the peeling sequence order \wrt $Q$\\ \hline
				$S^P$ & the vertex set returned by a peeling algorithm \\ \hline
				$S^*$ & the optimal vertex set, \ie $g(S^*)$ is maximized \\  \hline
			\end{tabular}
		\end{center}
	\end{scriptsize}
\end{table}

\stitle{Graph $G$.} We consider a directed and weighted graph $G=(V,E)$, where $V$ is a set of vertices and $E$ $\subseteq (V\times V)$ is a set of edges. Each edge $(u_i,u_j)\in E$ has a \textbf{nonnegative} weight, denoted by $c_{ij}$. We use $N(u)$ to denote the neighbors of $u$.

\stitle{Induced subgraph.} Given a subset $S$ of $V$, we denote the induced subgraph by $G[S] = (S, E[S])$, where $E[S]=\{(u,v) | (u,v) \in E \wedge u,v\in S\}$. We denote the size of $S$ by $|S|$.

\stitle{Density metrics $g$.} We adopt the class of metrics $g$ in previous studies~\cite{hooi2016fraudar,gudapati2021search,charikar2000greedy}, $g(S) = \frac{f(S)}{|S|}$, where $f$ is the total weight of $G[S]$, \ie the sum of the weight of $S$ and $E[S]$:

\begin{equation}\label{eq:density}
    f(S)=\sum_{u_i\in S} a_i + \sum_{u_i,u_j\in S \bigwedge (u_i,u_j)\in E} c_{ij}
\end{equation}

The weight of a vertex $u_i$ measures the suspiciousness of user $u_i$, denoted by $a_i$ ($a_i\geq 0$). The weight of the edge $(u_i,u_j)$ measures the suspiciousness of transaction $(u_i,u_j)$, denoted by $c_{ij} > 0$. Intuitively, $g(S)$ is the density of the induced subgraph $G[S]$. The larger $g(S)$ is, the denser $G[S]$ is.

\stitle{Graph updates $\Delta G$.} We denote the set of updates to $G$ by $\Delta G = (\Delta V, \Delta E)$. We denote the graph obtained by updating $\Delta G$ to $G$ as $G\oplus \Delta G$. Since transaction graphs continue to evolve, we consider edge insertion rather than edge deletion. Therefore, $G\oplus \Delta G = (V\cup \Delta V, E\cup \Delta E)$. Specifically, we consider two types of updates, edge insertion (\ie $|\Delta E| = 1$) and edge insertion in batch (\ie $|\Delta E| > 1$).

\subsection{Peeling algorithms}

\begin{algorithm}[tb]
    \caption{Execution paradigm of peeling algorithms}\label{algo:peeling}
    \footnotesize
    \SetKwProg{Fn}{Function}{}{}
    \KwIn{A graph $G = (V, E)$ and a density metric $g(S)$}
    \KwOut{The peeling sequence order $\Seq = Q(G)$ and the fraudulent community}

    $S_0 = V$ \label{algo:peeling:init} \\

    \For{$i=1,\ldots, |V|$}{
        select the vertex $u\in S_{i-1}$ such that $g(S_{i-1}\setminus \{u\})$ is maximized \label{algo:peeling:maximize}\\
        $S_{i} = S_{i-1}\setminus \{u\}$ \label{algo:peeling:maximize2} \\ 
        $\Seq.\mathsf{add}$($u$) 
    }
    \Return{$\Seq$ \textnormal{and} $\arg\max_{S_i}g(S_i)$}
\end{algorithm}

Peeling algorithms ($Q$) are widely used in dense subgraph mining~\cite{hooi2016fraudar,tsourakakis2015k,charikar2000greedy}. They follow the execution paradigm in Algorithm~\ref{algo:peeling} and differ mainly in density metrics. They are categorized to three categories: unweighted~\cite{charikar2000greedy}, edge-weighted~\cite{gudapati2021search} and hybrid-weighted~\cite{hooi2016fraudar}.

\stitle{Peeling weight.} Specifically, we use $w_{u_i}(S)$ to indicate the decrease in the value of $f$ when the vertex $u_i$ is removed from a vertex set $S$, \ie the peeling weight. Previous work~\cite{hooi2016fraudar} formalizes $w_{u_i}(S)$ as follows:

\begin{equation}
    w_{u_i}(S) = a_i + \sum_{(u_j\in S) \bigwedge ((u_i,u_j)\in E)} c_{ij} + \sum_{(u_j\in S) \bigwedge ((u_j,u_i)\in E)} c_{ji} 
\end{equation}

\eat{where $c_{ij}$ is the weight of the edge $(u_i,u_j)$ indicating its suspiciousness, $a_i$ is the weight of the vertex $u_i$ indicating the suspiciousness of $u_i$, and $u_j\in N(u_i)$.}

\stitle{Peeling sequence}. We use $S_i$ to denote the vertex set after $i$-th peeling step. Initially, the peeling algorithms set $S_0 = V$. They iteratively remove a vertex $u_i$ from $S_{i-1}$, such that $g(S_{i-1}\setminus \{u_i\})$ is maximized (Line~\ref{algo:peeling:maximize}$\sim$\ref{algo:peeling:maximize2}). The process repeats recursively until there are no vertices left. This leads to a series of sets over $V$, denoted by $S_0, \ldots, S_{|V|}$ of sizes $|V|, \ldots, 0$. Then $S_i$ ($i\in [0, |V|]$), which maximizes the density metric $g(S_i)$, is returned, denoted by $S^P$.  For simplicity, we denote $\Delta_i = w_{u_i}(S_i)$. Instead of maintaining the series $S_0,\ldots, S_{|V|}$, we record the peeling sequence $\Seq = [u_1,\ldots u_{|V|}]$ such that $\{u_i\} = S_{i-1}\setminus S_{i}$.

\begin{example}
    Consider the graph $G$ in Figure~\ref{fig:peel}. $u_1$ is peeled since its peeling weight is the smallest among all vertices. Similarly, $u_3,u_2,u_4,u_5$ will be peeled accordingly. Therefore, the peeling sequence is $\Seq = [u_1,u_3,u_2,u_4,u_5]$.
\end{example}

\stitle{Complexity and accuracy guarantee.} In Algorithm~\ref{algo:peeling}, Min-Heap is used to maintain the peeling weights, the insertion cost is $O(\log |V|)$. There are at most $|E|$ insertions. Therefore, the complexity of Algorithm~\ref{algo:peeling} is $O(|E|\log |V|)$.  We denote the vertex set that maximizes $g$ by $S^*$. Previous studies~\cite{khuller2009finding,hooi2016fraudar,charikar2000greedy} conclude that:

\eat{, we extend the proof of the accuracy guarantee to general graphs as follows.}

\begin{lemma}\label{lemma:2ppr}
Let $S^P$ be the vertex set returned by the peeling algorithms and $S^*$ be the optimal vertex set, $g(S^P) \geq \frac{1}{2} g(S^*)$.
\end{lemma}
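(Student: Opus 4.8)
The plan is to adapt Charikar's classical $\frac{1}{2}$-approximation argument for greedy peeling to our directed, vertex- and edge-weighted setting. First I would exploit the optimality of $S^*$ to place a uniform lower bound on the peeling weights of its vertices. Concretely, for any $u\in S^*$, optimality gives $g(S^*\setminus\{u\})\le g(S^*)$. Writing $g(S^*\setminus\{u\}) = \frac{f(S^*)-w_u(S^*)}{|S^*|-1}$ and $g(S^*)=\frac{f(S^*)}{|S^*|}$ and clearing the (positive) denominators, this rearranges to $w_u(S^*)\ge \frac{f(S^*)}{|S^*|}=g(S^*)$. So every vertex of the optimum has peeling weight at least $g(S^*)$ inside $S^*$.

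Next I would inspect the peeling run and single out the \emph{first} step $i$ at which the algorithm removes a vertex belonging to $S^*$; call this vertex $u$. Because no vertex of $S^*$ has been peeled before step $i$, we have $S_{i-1}\supseteq S^*$. Since all vertex and edge weights are nonnegative, enlarging the set can only add nonnegative contributions to a vertex's peeling weight, so $w_u(S_{i-1})\ge w_u(S^*)\ge g(S^*)$. Moreover, the greedy rule in Line~\ref{algo:peeling:maximize} maximizes $g(S_{i-1}\setminus\{u\})=\frac{f(S_{i-1})-w_u(S_{i-1})}{|S_{i-1}|-1}$, which for fixed $f(S_{i-1})$ and $|S_{i-1}|$ is equivalent to removing the vertex of \emph{minimum} peeling weight in $S_{i-1}$; hence every $v\in S_{i-1}$ satisfies $w_v(S_{i-1})\ge w_u(S_{i-1})\ge g(S^*)$.

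Finally I would sum these bounds and compare against $f(S_{i-1})$. Summing over $v\in S_{i-1}$ gives $\sum_{v\in S_{i-1}} w_v(S_{i-1})\ge |S_{i-1}|\,g(S^*)$. On the other hand, in the definition of $w_v$ each directed edge of $G[S_{i-1}]$ is counted exactly twice (once as an out-edge of its tail and once as an in-edge of its head), while each vertex weight is counted once, so $\sum_{v\in S_{i-1}} w_v(S_{i-1}) = \sum_{v} a_v + 2\sum_{e} c_e = 2f(S_{i-1}) - \sum_v a_v \le 2f(S_{i-1})$. Combining the two estimates yields $2f(S_{i-1})\ge |S_{i-1}|\,g(S^*)$, i.e. $g(S_{i-1})\ge \frac{1}{2}g(S^*)$. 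Since $S^P$ is chosen as the densest set among all $S_0,\ldots,S_{|V|}$, we conclude $g(S^P)\ge g(S_{i-1})\ge \frac{1}{2}g(S^*)$, which is the claim.

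The main obstacle is making the double-counting step watertight in the directed, weighted model: I must verify that the peeling-weight definition charges each directed edge at both endpoints (so the factor is exactly $2$) and that the additive vertex weights $a_i$ enter $\sum_v w_v$ and $f$ in a way that keeps the inequality $\sum_v w_v \le 2f$ pointing in the right direction. The nonnegativity of the $a_i$ and $c_{ij}$ is precisely what guarantees both this bound and the monotonicity $w_u(S_{i-1})\ge w_u(S^*)$ used above, so I would flag these hypotheses explicitly rather than treat them as routine.
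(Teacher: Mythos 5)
Your proof is correct and follows essentially the same route as the paper's own argument: identify the first vertex $u$ of $S^*$ peeled, bound $w_u(S^*)\geq g(S^*)$ via optimality of $S^*$, lift this to $w_u(S_{i-1})$ by monotonicity under nonnegative weights, spread it to all of $S_{i-1}$ by the greedy rule, and close with the double-counting bound $\sum_v w_v \leq 2f$ and the maximality of $S^P$ over the peeled sets. The only difference is cosmetic ordering of the inequalities in the chain, so there is nothing to fix.
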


\eat{
\begin{proof}
    As visualized in Figure~\ref{fig:proof}, we assume that $u_i\in S^P$ is the first vertex in $S^*$ removed by the peeling algorithm. Since $u_i$ is the smallest of the decrease in the value of $f(S')$, we have 
    
    \begin{equation}\label{eq:proof}
    \footnotesize
        g(S') = \frac{f(S')}{|S'|} \geq  \frac{\sum\limits_{u_j\in S'} w_{u_j}(S')}{2\times |S'|} \geq \frac{w_{u_i}(S') |S'|}{2\times |S'|} = \frac{w_{u_i}(S')}{2}
    \end{equation}
    
Due to the definition of $S^P$, $g(S^P) \geq g(S')$. Combining Equation~\ref{eq:proof}, we have the following proof.
    
    \begin{equation}
    \footnotesize
        g(S^P) \geq g(S') = \frac{f(S')}{|S'|} \geq \frac{w_{u_i}(S')}{2} \geq \frac{w_{u_i}(S^*)}{2} \geq \frac{g(S^*)}{2}
    \end{equation}
\end{proof}
}

Although peeling algorithms are scalable and robust, we remark that these algorithms are proposed for static graphs, which takes several minutes on million-scale graphs. For evolving graphs, computing from scratch is still time-consuming, which cannot meet the real-time requirement. Moreover, it is not trivial to design incremental algorithms for peeling algorithms. In this paper, we investigate an auto-incrementalization framework for peeling algorithms.

\stitle{Problem definition.} Given a graph $G=(V,E)$, a peeling algorithm $Q$, and the peeling result of $Q$ on $G$, $S^P=Q(G)$, our problem is to efficiently identify the result of $Q$ on $G\oplus \Delta G$, ${S}^{P'}=Q(G\oplus \Delta G)$, where $\Delta G$ is the graph updates.

\begin{figure}
    \includegraphics[width=0.65\linewidth]{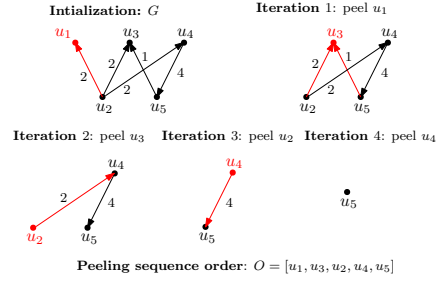}
    \caption{Example of peeling algorithms}\label{fig:peel}    
\end{figure}

\eat{
\begin{figure}
    \includegraphics[width=0.65\linewidth]{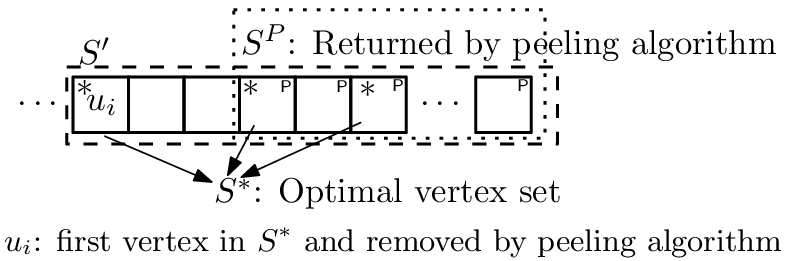}
    \caption{Illustration of key notations in peeling algorithms}\label{fig:proof}    
\end{figure}
}

\section{The \Spade{} Framework}\label{sec:framework}

In this section, we present an overview of our proposed framework \Spade{} and sample APIs. Subsequently, we demonstrate some examples on how to implement different peeling algorithms with \Spade.

\begin{figure}
\begin{minipage}[t]{.3\textwidth}
        \includegraphics[width=\linewidth]{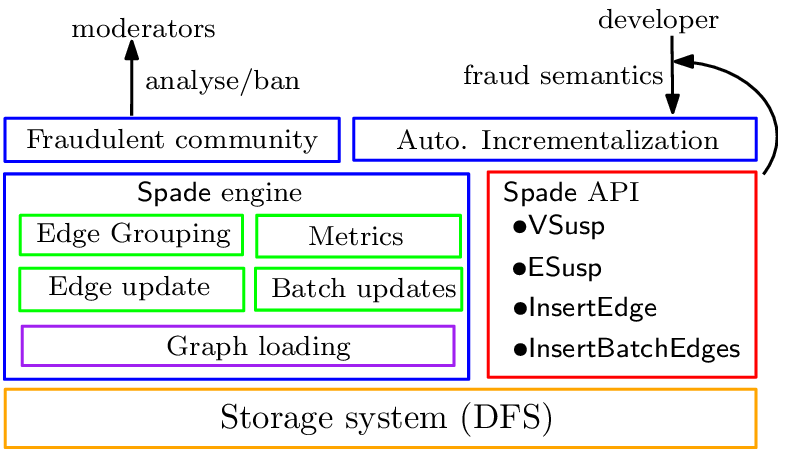}
        \subcaption[]{Architecture}
\end{minipage}
\hfill
\begin{minipage}[t]{.13\textwidth}
\includegraphics[width=\textwidth]{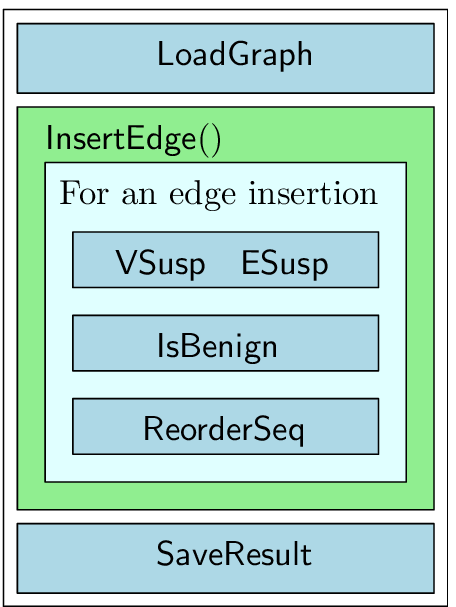}
\subcaption[]{Edge insertion
}
\end{minipage}
  \caption{Architecture of \Spade{} and workflow of an edge insertion}\label{fig:architecture}
\end{figure}

\subsection{Overview of \Spade{} and APIs}

\eat{
\begin{table*}[tb]
    \caption{Sample User-defined APIs in \Spade}\label{table:api}
    \vspace{-1em}
    \centering
    \begin{scriptsize}
    \begin{tabular}{|l|l|}
      \hline
     \textbf{APIs/Parameters} & \textbf{Descriptions} \\ \hline  
      $\mathsf{VSusp}$(Function $\mathsf{Score}$(Vertex $u$)) &  Given a suspiciousness score function $\mathsf{Score}$, \Spade{} plugins it in and computes the suspiciousness score for each vertex $u$\\
      \hline
      $\mathsf{ESusp}$(Function $\mathsf{Score}$(Edge $e$) &  Given a suspiciousness score function $\mathsf{Score}$, \Spade{} plugs it in and computes the suspiciousness for each edge $e$ \\
      \hline
      $\mathsf{InsertEdge}$(Edge $e$) & Given a new edge $e$, \Spade{} reorders the peeling sequence with the edge update \\
      \hline
      $\mathsf{InsertBatchEdges}$(Edge* \textit{e\_arr}) & Given a set of new transactions and the suspiciousness, \Spade{} reorders the peeling sequence with the batch updates \\
      \hline
    \end{tabular}
    \end{scriptsize}
    \end{table*}
}  

We follow two design goals to satisfy operational demands.

\begin{itemize}
    \item \etitle{Programmability.} We provide a set of user-defined APIs for developers to develop their dense subgraph-based semantics to detect fraudsters. Moreover, \Spade{} can auto-incrementalize their semantics without recasting the algorithms. 
    \item \etitle{Efficiency.} \Spade{} allow efficient and scalable fraud detection on evolving graphs in real-time.
\end{itemize}

\stitle{Architecture of \Spade.} Figure~\ref{fig:architecture} shows the architecture of \Spade{} and the workflow of an edge insertion. \Spade{} automatically incrementalizes peeling algorithms with the user-defined suspiciousness functions. To avoid computing from scratch on evolving graphs, the engine of \Spade{} maintains the fraudulent community incrementally with an edge update (Section~\ref{sec:edgebyedge}). Batch execution is developed to improve the efficiency of handling edge updates in batch (Section~\ref{sec:batch}). The updated fraudulent community is identified in real time and returned to the moderators for further analysis. Given an edge insertion, the workflow of \Spade{} contains the following components:

\begin{itemize}
    \item \underline{$\mathsf{VSusp}$ and $\mathsf{ESusp}$.} Given a new vertex/edge, these components are responsible for deciding the suspiciousness of the endpoint of the edge or the edge with a user-defined strategy.
    \item \underline{$\mathsf{IsBenign}$.} This component is responsible for deciding whether a new edge is benign (Section~\ref{sec:urgent}). If the edge is benign, it is inserted into an edge vector pending reordering; otherwise, peeling sequence reordering is triggered immediately for the edge buffer with this new edge.
    \item \underline{$\mathsf{ReorderSeq}$.} This component is responsible for incrementally maintaining the peeling sequence and deciding the new fraudulent community with the graph updates detailed in Section~\ref{sec:Spade}.
\end{itemize}

\definecolor{solarized@base03}{HTML}{002B36}
\definecolor{solarized@base02}{HTML}{073642}
\definecolor{solarized@base01}{HTML}{586e75}
\definecolor{solarized@base00}{HTML}{657b83}
\definecolor{solarized@base0}{HTML}{839496}
\definecolor{solarized@base1}{HTML}{93a1a1}
\definecolor{solarized@base2}{HTML}{EEE8D5}
\definecolor{solarized@base3}{HTML}{FDF6E3}
\definecolor{solarized@yellow}{HTML}{B58900}
\definecolor{solarized@orange}{HTML}{CB4B16}
\definecolor{solarized@red}{HTML}{DC322F}
\definecolor{solarized@magenta}{HTML}{D33682}
\definecolor{solarized@violet}{HTML}{6C71C4}
\definecolor{solarized@blue}{HTML}{268BD2}
\definecolor{solarized@cyan}{HTML}{2AA198}
\definecolor{solarized@green}{HTML}{859900}

\lstset{language=C++,
        basicstyle=\linespread{0.8}\footnotesize\ttfamily,
        numbers=left,
        numberstyle=\scriptsize,
        tabsize=1,
        breaklines=true,
        backgroundcolor=\color{white}, 
        escapeinside={@}{@},
        numberstyle=\tiny\color{solarized@base01},
        keywordstyle=\color{solarized@green},
        stringstyle=\color{solarized@cyan}\ttfamily,
        identifierstyle=\color{solarized@blue},
        commentstyle=\color{solarized@base01},
        emphstyle=\color{solarized@red},
        frame=single,
        rulecolor=\color{solarized@base2},
        rulesepcolor=\color{solarized@base2},
        showstringspaces=false,xleftmargin=0.2cm
}

\stitle{APIs and data structure (Listing~\ref{lst:listing-cpp}).} We provide APIs for developers to customize and deploy their peeling algorithms for different application requirements. Developers can customize $\mathsf{VSusp}$ and $\mathsf{ESusp}$ to develop their fraud detection semantics. We design two APIs for edge insertion, namely $\mathsf{InsertEdge}$ and $\mathsf{InsertBatchEdges}$. The $\mathsf{Detect}$ function spots the fraudulent community on the current graph. $\mathsf{IsBenign}$ and $\mathsf{ReorderSeq}$ are two built-in APIs which are transparent to developers. They are activated when new edges are inserted. \Spade{} uses the adjacency list to store the graph. Two vectors $\textit{_seq}$ and $\textit{_weight}$ are used to store the peeling sequence and the peeling weights. 

\begin{center}
  \scriptsize
  \lstinputlisting[caption=Overview of \Spade, label={lst:listing-cpp}, language=C++]{spade.cpp}\label{apis}
\end{center}

\jiaxin{\stitle{Characteristic of density metrics.} We next formalize the sufficient condition of the density metrics that can be supported by \Spade{}.}

\begin{property}
If 1) $g(S)$ is an arithmetic density, \ie $g=\frac{|f(S)|}{|S|}$, 2) $a_i\geq 0$, and 3) $c_{ij} > 0$, then $g(S)$ is supported by \Spade{}.
\end{property}

\jiaxin{The correctness is satisfied since \Spade{} correctly returns the peeling sequence order (detailed in Section~\ref{sec:Spade}).  We also characterize the properties of these popular density metrics in Appendix~\ref{sec:axioms} of \cite{techreport}.}

\eat{
\noindent\jiaxin{\Spade{} focuses on the arithmetic density in the form of $g(S) = \frac{|f(S)|}{|S|}$ as introduced in Section~\ref{sec:background}. There are only two simple conditions of $f(S)$ (detailed in Equation~\ref{eq:density}) required by \Spade{}: i) for any $a_i$, the suspiciousness of $u_i$, is non-negative; and ii) for any $c_{ij}$, the suspiciousness of $(u_i,u_j)$ is positive.}
}

\tr{
\subsection{Instances of \Spade{}}
}

\stitle{Instances.} We show that popular peeling algorithms are easily implemented and supported by \Spade{}, \eg $\DENG$~\cite{charikar2000greedy}, $\DENGW{}$~\cite{gudapati2021search} and $\Fraudar{}$~\cite{hooi2016fraudar}. \jiaxin{We take $\Fraudar{}$ as an example and leave the discussion of the other instances in the Appendix~\ref{sec:instances} of~\cite{techreport}.} To resist the camouflage of fraudsters, Hooi et al. ~\cite{hooi2016fraudar} proposed $\Fraudar{}$ to weight edges and set the prior suspiciousness of each vertex with side information. Let $S\subseteq V$. The density metric of $\Fraudar$ is defined as follows:

\tr{
\etitle{Instance 1. Dense subgraphs ($\DENG{}$)~\cite{charikar2000greedy}.} $\DENG{}$ is designed to quantify the connectivity of substructures. It is widely used to detect fake comments~\cite{kumar2018community} and fraudulent activities~\cite{ban2018badlink} on social graphs. Let $S\subseteq V$. The density metric of $\DENG{}$ is defined by $g(S) = \frac{|E[S]|}{|S|}$. To implement $\DENG$ on \Spade{}, developers only need to design and plug in the suspiciousness function $\mathsf{esusp}$ by calling $\mathsf{ESusp}$. Specifically,  $\mathsf{esusp}$ is a constant function for edges, \ie $\mathsf{esusp}(u_i,u_j) = 1$.
}

\tr{
\etitle{Instance 2. Dense weighted subgraphs ($\DENGW$)~\cite{gudapati2021search}.} On transaction graphs, there are weights on the edges in usual, such as the transaction amount. The density metric of $\DENGW{}$ is defined by $g(S) = \frac{\sum_{(u_i,u_j)\in E[S]}c_{ij}}{|S|}$, where $c_{ij}$ is the weight of the edge $(u_i,u_j)\in E$. To implement $\DENGW{}$, users only need to plug in the suspiciousness function $\mathsf{esusp}$, \ie given an edge, $\mathsf{esusp}(u_i,u_j) = c_{ij}$.
}

\begin{equation}
    g(S) = \frac{f(S)}{|S|} = \frac{\sum_{u_i\in S} a_i + \sum_{u_i,u_j\in S \bigwedge (u_i,u_j)\in E} c_{i,j}}{|S|}
\end{equation}

\tr{
\begin{center}
  \scriptsize
  \lstinputlisting[caption=Implementation of $\Fraudar{}$ on \Spade, label={lst:listing-fd}, language=C++]{fd.cpp}
\end{center}
}

To implement $\Fraudar$ on \Spade{}, users only need to plug in the suspiciousness function $\mathsf{vsusp}$ for the vertices by calling $\mathsf{VSusp}$ and the suspiciousness function $\mathsf{esusp}$ for the edges by calling $\mathsf{ESusp}$. Specifically, 1) $\mathsf{vsusp}$ is a constant function, \ie given a vertex $u$, $\mathsf{vsusp}(u) = a_i$ and 2) $\mathsf{esusp}$ is a logarithmic function such that given an edge $(u_i,u_j)$, $\mathsf{esusp}(u_i,u_j) = \frac{1}{\log (x+c)}$, where $x$ is the degree of the object vertex between $u_i$ and $u_j$, and $c$ is a small positive constant ~\cite{hooi2016fraudar}.

Developers can easily implement customized peeling algorithms with \Spade{}, which significantly reduces the engineering effort. For example, users write only about $20$ lines of code (compared to about $100$ lines in the original $\Fraudar$~\cite{hooi2016fraudar}) to implement $\Fraudar$.

\section{Incremental peeling algorithms}\label{sec:Spade}

\eat{Given a set of updates to the graph, computing from scratch is costly. On the other hand, large suspicious transactions occur in a very short timeslot. Consequently, fraud cannot be identified and prevented at an early stage. Therefore, i} 

In this section, we propose several techniques to incrementally identify fraudsters by reordering the peeling sequence $\Seq$ with graph updates, \ie the peeling sequence on $G\oplus \Delta G$, denoted by $\Seq'$. 

\tr{In Section~\ref{sec:edgebyedge}, we introduce how to reorder the peeling sequence with edge insertion. In Section~\ref{sec:batch}, we elaborate on how to reorder the peeling sequence in batch. In Section~\ref{sec:urgent}, we show how to distinguish potential fraudulent transactions from benign transactions. \Spade{} groups the benign transaction insertion to reduce the latency of response to potential fraudulent transactions.}

\subsection{Peeling sequence reordering with edge insertion}\label{sec:edgebyedge}

Given a graph $G=(V,E)$, the peeling sequence $\Seq$ on $G$ and the graph updates $\Delta G=(\Delta V, \Delta E)$, where $|\Delta E| = 1$, \Spade{} returns the peeling sequence $\Seq'$ on $G\oplus \Delta G$.

\eat{
\begin{lemma}
    Given two vertices $u_i$ and $u_j$ where $i\leq j$, $w_{u_i}(S_i) \leq w_{u_j}(S_i)$.
\end{lemma}
\begin{proof}
    Since $i\leq j$, $S_i \subseteq S_j$. Hence $w_{u_i}(S_j)$
\end{proof}
}

\stitle{Vertex insertion.} Given a new vertex $u$, we insert it into the head of the peeling sequence and initialize its peeling weight by $\Delta_0 = 0$.

\stitle{Insertion of an edge $(u_i,u_j)$.} Without loss of generality, we assume $i < j$ and denote the weight of $(u_i,u_j)$ by $\Delta = c_{ij}$. \jiaxin{Given an edge insertion $(u_i,u_j)$, we observe that a part of the peeling sequence will not be changed. We formalize the finding as follows.}

\begin{lemma}\label{lemma:seq}
    $\Seq'[1:i-1] = \Seq[1:i-1]$.
\end{lemma}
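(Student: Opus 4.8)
The plan is to prove the statement by induction on the peeling step, showing that the first $i-1$ choices made by the peeling algorithm on $G\oplus\Delta G$ coincide with those on $G$. The starting observation is that maximizing $g(S_{k-1}\setminus\{u\})$ over $u\in S_{k-1}$ is the same as minimizing the peeling weight $w_u(S_{k-1})$: since $g(S_{k-1}\setminus\{u\})=\frac{f(S_{k-1})-w_u(S_{k-1})}{|S_{k-1}|-1}$ and both $f(S_{k-1})$ and $|S_{k-1}|$ are fixed at step $k$, the vertex peeled is exactly the one of least peeling weight. Hence it suffices to reason about how the insertion of $(u_i,u_j)$ perturbs the peeling weights.

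Next I would record the effect of the update. Writing $\Delta=c_{ij}>0$ and denoting the peeling weights on $G\oplus\Delta G$ by $w'$, for any vertex set $S$ that contains both $u_i$ and $u_j$ the new edge contributes $\Delta$ to $w_{u_i}(S)$ and $\Delta$ to $w_{u_j}(S)$, while leaving $w_u(S)$ unchanged for every other $u\in S$. The crucial point is that the only weights that change strictly increase, and they belong to $u_i,u_j$, both of which are peeled at positions $\geq i$ in the original sequence.

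The induction itself: let $S_k$ and $S'_k$ be the peeling sets on $G$ and $G\oplus\Delta G$. I claim that for $k=0,\ldots,i-1$ we have $S'_k=S_k$, and $\Seq'[k]=\Seq[k]=u_k$ for $k\geq 1$. The base case $S'_0=V=S_0$ is immediate. For the inductive step, assume $S'_{k-1}=S_{k-1}$ with $k\leq i-1$. Because $k-1<i\leq j$, none of $u_1,\ldots,u_{k-1}$ equals $u_i$ or $u_j$, so both $u_i,u_j\in S_{k-1}$ and the vertex $u_k$ selected on $G$ satisfies $u_k\neq u_i,u_j$. Consequently $w'_{u_k}(S_{k-1})=w_{u_k}(S_{k-1})$, while for every $u\in S_{k-1}$ we have $w'_u(S_{k-1})\geq w_u(S_{k-1})\geq w_{u_k}(S_{k-1})=w'_{u_k}(S_{k-1})$, where the middle inequality holds because $u_k$ minimizes $w_{\cdot}(S_{k-1})$ on $G$. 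Thus $u_k$ still attains the minimum peeling weight on $G\oplus\Delta G$, so the algorithm peels $u_k$ again at step $k$ and $S'_k=S'_{k-1}\setminus\{u_k\}=S_k$. Iterating through $k=1,\ldots,i-1$ yields $\Seq'[1:i-1]=\Seq[1:i-1]$.

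The one subtlety to nail down is ties: a priori several vertices could share the minimum peeling weight, so I would assume the algorithm uses a fixed deterministic tie-breaking rule induced by a total order on vertices. The argument above shows that the set of minimizers on $G\oplus\Delta G$ is obtained from that on $G$ by possibly deleting $u_i$ and/or $u_j$ (whose weights strictly increased), and that it still contains $u_k$; since $u_k$ was the tie-break winner among a superset of candidates, it remains the winner after those candidates are removed. This is the only place where care is needed, and it is where I expect the main (though mild) obstacle to lie; everything else follows directly from the monotonicity principle that inserting a positive-weight edge can only raise the peeling weights of its two endpoints, and never those of any vertex peeled before them.
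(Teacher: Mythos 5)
Your proof is correct and follows essentially the same route as the paper's: an induction over the first $i-1$ peeling steps, using the observation that inserting $(u_i,u_j)$ increases only the peeling weights of $u_i$ and $u_j$ (both still unpeeled), so each $u_k$ with $k<i$ remains the minimizer and is peeled again. Your explicit reduction of $g$-maximization to peeling-weight minimization and your treatment of tie-breaking are refinements the paper's one-line proof glosses over, but they do not change the underlying argument.
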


\jiaxin{Due to space limitations, all the proofs in this section are presented in Appendix~\ref{sec:proof} of \cite{techreport}.}

\stitle{Affected area (\AFF) and pending queue ($T$).} Given updates $\Delta G$ to graph $G$ and an incremental algorithm \TCal, we denote by \AFF{}$=(V_{\mathcal{T}},E_{\mathcal{T}})$ the subgraph inspected by \TCal{} in $G$ that indicates the necessary cost of incrementalization. Moreover, we construct a priority queue $T$ for the vertices pending reordering in ascending order of the peeling weights.

\stitle{Incremental algorithm (\TCal).} \TCal{} initializes an empty vector for the updated peeling sequence $\Seq'$ and append $\Seq[1:i-1]$ to $\Seq'$ due to the Lemma~\ref{lemma:seq}. We iteratively compare 1) the head of $T$, denoted by $u_{\min}$ and 2) the vertex $u_k$ in the peeling sequence $\Seq$, where $k > i$. The corresponding peeling weights are denoted by $\Delta_{\min}$ and $\Delta_k$. We consider the following three cases:

\stitle{Case 1.} If $\Delta_{\min} < \Delta_k $, we pop the $u_{\min}$ from $T$ and insert it to $\Seq'$. Then we update the priorities in $T$ for the neighbors of $u_{\min}$, $N(u_{\min})$.

\stitle{Case 2.} If $\Delta_{\min} \geq \Delta_{k}$ and $\exists u_{T}\in T, (u_{T},u_{k})\in E$ or $(u_{k},u_{T})\in E$, we insert $u_k$ into $T$. The peeling weight is $w_{u_k}(T\cup S_k)$ $ = \Delta_k $ $ + $ $\sum_{(u_T\in T) \bigwedge ((u_T,u_k)\in E)}$ $ c_{Tk} + $ $\sum_{(u_T\in T) \bigwedge ((u_k,u_T)\in E)} c_{kT}$, $k=k+1$.

\stitle{Case 3.} If $\Delta_{\min} \geq \Delta_{k}$ and $\forall u_{T}\in T, (u_{T},u_{k})\not\in E$ and $(u_{k},u_{T})\not\in E$, we insert $u_{k}$ to $\Seq'$, $k=k+1$.

\setlist{nolistsep}
\eat{
\begin{itemize}[leftmargin=*]
    \item \stitle{Case 1.} If $\Delta_{\min} < \Delta_k $, we pop the $u_{\min}$ from $T$ and insert it to $\Seq'$. Then we update the priorities in $T$ for the neighbors of $u_{\min}$, $N(u_{\min})$.
    \item \stitle{Case 2.} If $\Delta_{\min} \geq \Delta_{k}$ and $\exists u_{T}\in T, (u_{T},u_{k})\in E$ or $(u_{k},u_{T})\in E$, we insert $u_k$ into $T$. The peeling weight is $w_{u_k}(T\cup S_k)$ $ = \Delta_k $ $ + $ $\sum_{(u_T\in T) \bigwedge ((u_T,u_k)\in E)}$ $ c_{Tk} + $ $\sum_{(u_T\in T) \bigwedge ((u_k,u_T)\in E)} c_{kT}$, $k=k+1$.
    \item \stitle{Case 3.} If $\Delta_{\min} \geq \Delta_{k}$ and $\forall u_{T}\in T, (u_{T},u_{k})\not\in E$ and $(u_{k},u_{T})\not\in E$, we insert $u_{k}$ to $\Seq'$, $k=k+1$.
\end{itemize}
}

We repeat the above iteration until $T$ is empty.

\begin{figure*}[tb]
\includegraphics[width=0.95\linewidth]{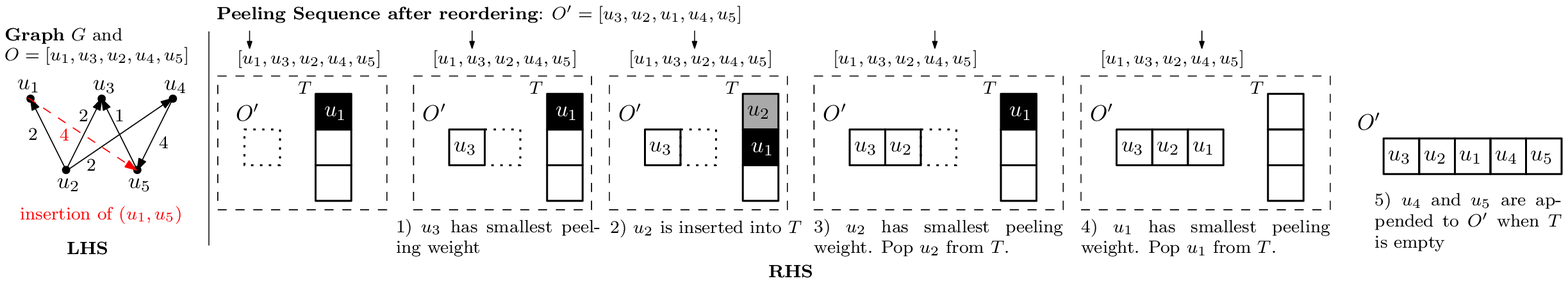}
\vspace{-1.4em}
    \caption{\jiaxin{Peeling sequence reordering  with edge insertion (A running example)} }\label{fig:reorder}
\end{figure*}

\begin{example}
    Consider the graph $G$ in Figure~\ref{fig:peel} and its peeling sequence $\Seq = [u_1, u_3, u_2, u_4, u_5]$. Suppose that a new edge $(u_1,u_5)$ is inserted into $G$ and its weight is $4$ as shown in the LHS of Figure~\ref{fig:reorder}. The reordering procedure is presented in the RHS of Figure~\ref{fig:reorder}. $u_1$ is pushed to the pending queue $T$. Since the peeling weight of the next vertex in $\Seq$, $u_3$, is the smallest, it will be inserted directly into $\Seq'$. Since $u_2\in N(u_1)$, we recover its peeling weight and push it into $T$. Since the peeling weights of $u_2$ and $u_1$ are smaller than those of $u_4$, they will pop out of $T$ and insert into $\Seq'$. Once $T$ is empty, the rest of the vertices, $u_4$ and $u_5$, in $\Seq$ are appended to $\Seq'$ directly. Therefore, the reordered peeling sequence is $\Seq'=[u_3,u_2,u_1,u_4,u_5]$.
\end{example}

\eat{
\begin{lemma}\label{lemma:subset}
    If $S_i \subseteq S_j$ and $u_k\in S_i$, $w_{u_k}(S_j) \geq w_{u_k}(S_i)$.
\end{lemma}
}

\eat{
\begin{proof}
By definition, we have the following.
\begin{equation}
\footnotesize
\begin{split}
    w_{u_k}(S_j) & = a_k + \sum_{(u_j\in S_j) \bigwedge ((u_k,u_j)\in E)} c_{kj} + \sum_{(u_j\in S_j) \bigwedge ((u_j,u_k)\in E)} c_{jk} \\
    & = w_{u_k}(S_i) + \sum_{(u_j\in S_j\setminus S_i) \bigwedge ((u_k,u_j)\in E)} c_{kj} + \sum_{(u_j\in S_j\setminus S_i) \bigwedge ((u_j,u_k)\in E)} c_{jk}
\end{split}
\end{equation}
Since the weights on the edges are nonnegative, $w_{u_k}(S_j) > w_{u_k}(S_i)$.
\end{proof}
}

\noindent\jiaxin{\stitle{Remarks.} If the peeling weight of $u_k$ is greater than that of the head of $T$ (\ie $u_{\min}$), then $u_{\min}$ has the smallest peeling weight among $T\cup S_k$. We formalize this remark as follows.}

\begin{lemma}\label{lemma:peel}
    If $\Delta_k > \Delta_{\min}$, $u_{\min} = \mathop{\arg\min}\limits_{u\in T\cup S_k}w_{u}(T\cup S_k)$.
\end{lemma}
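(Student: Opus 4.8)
The plan is to show that $u_{\min}$ attains the minimum peeling weight over the entire remaining set $T\cup S_k$ by splitting that set into its two natural pieces, the pending queue $T$ and the untouched suffix $S_k=\{u_k,u_{k+1},\ldots,u_{|V|}\}$ of the original sequence, and arguing that $u_{\min}$ beats every vertex of each piece. Two ingredients will drive the argument: a monotonicity property of peeling weights, and the greedy minimality that defined $u_k$ in the original run. First I would (re)establish the monotonicity fact (the lemma currently commented out in the text): for any vertex $v$ and any sets $S\subseteq S'$, we have $w_v(S)\le w_v(S')$. This is one line from the definition of $w_v$ together with $a_i\ge 0$ and $c_{ij}>0$, since $w_v(S')$ equals $w_v(S)$ plus the nonnegative contribution of the edges joining $v$ to $S'\setminus S$.

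The core step is the suffix bound. Fix any $v\in S_k$. By the greedy rule that produced the original peeling sequence, $u_k$ was removed as the minimum–peeling–weight vertex of the set it was peeled from, which is exactly $S_k$ in this section's (front-inclusive) convention; hence every vertex of $S_k$ has peeling weight at least $\Delta_k$ with respect to $S_k$, i.e. $w_v(S_k)\ge \Delta_k$. Applying monotonicity to the inclusion $S_k\subseteq T\cup S_k$ yields $w_v(T\cup S_k)\ge w_v(S_k)\ge \Delta_k$. Combining this with the hypothesis $\Delta_k>\Delta_{\min}$ gives $w_v(T\cup S_k)>\Delta_{\min}$ for every $v\in S_k$, so $u_{\min}$ strictly beats the whole suffix.

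For the pending queue I would invoke the invariant that the priorities stored in $T$ equal the true weights $w_v(T\cup S_k)$: a vertex's key is initialised as $w_{u_k}(T\cup S_k)$ when it enters $T$ in Case~2, and it is refreshed in Case~1 whenever one of its neighbours is popped out of $T\cup S_k$. Under this invariant the head $u_{\min}$ of the min-queue satisfies $w_{u_{\min}}(T\cup S_k)=\Delta_{\min}\le w_v(T\cup S_k)$ for all $v\in T$. Merging the two parts, $w_{u_{\min}}(T\cup S_k)=\Delta_{\min}\le w_v(T\cup S_k)$ for every $v\in T\cup S_k$, which is exactly $u_{\min}=\mathop{\arg\min}\limits_{u\in T\cup S_k}w_{u}(T\cup S_k)$.

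The hard part will be the priority-queue invariant, not the inequalities: one must verify that the keys in $T$ genuinely track $w_v(T\cup S_k)$ as the remaining set $T\cup S_k$ shrinks, which calls for an induction over the iterations showing that Case~1's neighbour-refresh (and the fact that Case~3 only fires when $u_k$ has no neighbour in $T$) preserves correctness. The suffix inequality $w_v(T\cup S_k)\ge\Delta_k>\Delta_{\min}$ is the clean, self-contained core of the proof, resting only on monotonicity and the greedy minimality of $u_k$.
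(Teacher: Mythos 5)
Your proof is correct and follows essentially the same route as the paper's: it splits $T\cup S_k$ into the suffix $S_k$, handled by the monotonicity lemma plus the greedy minimality of $u_k$ (giving $w_v(T\cup S_k)\ge w_v(S_k)\ge\Delta_k>\Delta_{\min}$), and the queue $T$, handled by the min-heap property of the stored priorities. The only differences are cosmetic ones in your favor: you state explicitly the front-inclusive convention for $S_k$ (under which the greedy bound is exact) and flag the priority-queue invariant that the paper uses implicitly.
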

\eat{
\begin{proof}
\jiaxin{Due to space limitations, the proof is presented in Appendix~\ref{sec:proof} of \cite{techreport}.}
\end{proof}
}

\eat{
\begin{proof}
    Consider a vertex $u'\in T\cup S_k$, where $u'\not =u_k$ or $u'\not = u_{\min}$. 1) If $u'\in S_k$, due to Lemma~\ref{lemma:subset}, $w_{u'}(T\cup S_k) > w_{u'}(S_k) > w_{u_k}(S_k) \geq w_{u_k}(T\cup S_k) = \Delta_k >\Delta_{\min}$. 2) If $u'\in T$, $w_{u'}(T\cup S_k) > w_{u_{\min}}(T\cup S_k) = \Delta_{\min}$. Hence, $u'$ is not the vertex that has the smallest peeling weight. Therefore, $u_{\min}$ has the smallest peeling weight.
\end{proof}
}

\stitle{Correctness and accuracy guarantee.} In \textbf{Case 1} of \TCal{}, if $\Delta_k > \Delta_{\min}$, $u_{\min}$ is chosen to insert to $\Seq'$ since it has the smallest peeling weight due to Lemma~\ref{lemma:peel}. In \textbf{Case 3} of \TCal{}, $\Delta_k$ is the smallest peeling weight and $u_k$ is chosen to insert to $\Seq'$. The peeling sequence is identical to that of $G\oplus \Delta G$, since in each iteration the vertex with the smallest peeling weight is chosen. The accuracy of the worst-case is preserved due to Lemma~\ref{lemma:2ppr}.

\stitle{Time complexity.} The complexity of the incremental maintenance is $O(|E_{\mathcal{T}}|+|E_{\mathcal{T}}|\log|V_{\mathcal{T}}|)$. The complexity is bounded by $O(|E|\log |V|)$ and is small in practice.

\subsection{Peeling sequence reordering in batch} \label{sec:batch}

\begin{figure}
    \includegraphics[width=0.9\linewidth]{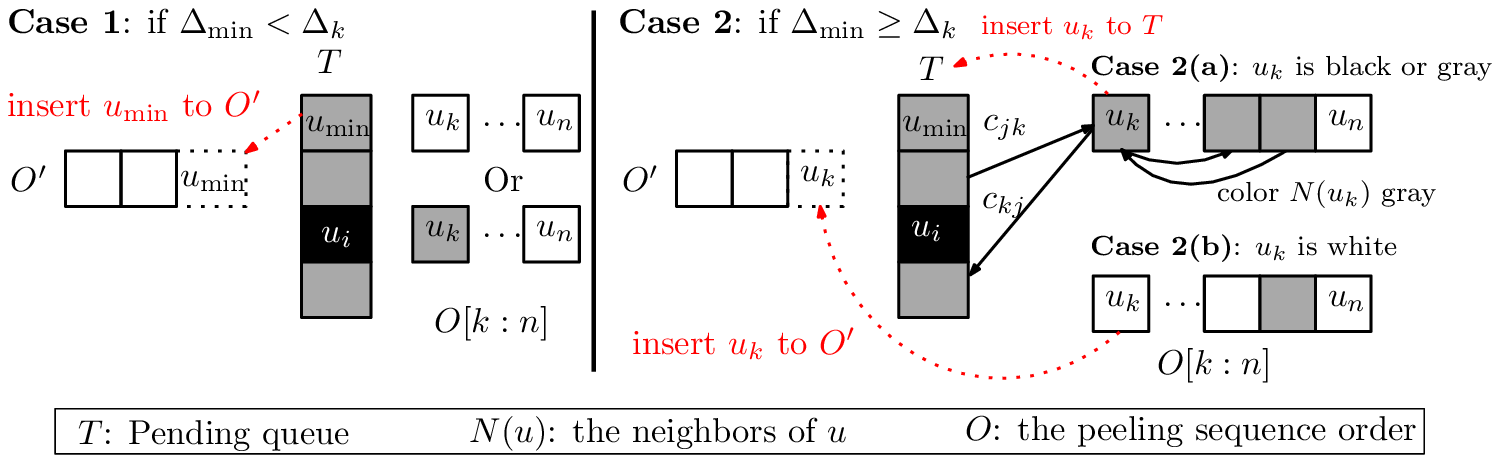}
    \caption{\jiaxin{Peeling sequence reordering in batch}}\label{fig:batch}
\end{figure}

Since the peeling sequence reordering by early edge insertions could be reversed by later ones, some reorderings are stale and duplicate. Suppose that the insertion is a subgraph $\Delta G = (\Delta V, \Delta E)$. A direct way to reorder the peeling sequence is to insert the edges one by one. The complexity is $O(|\Delta E| (|E_{\mathcal{T}}|\log|V_{\mathcal{T}}|))$ which is time consuming. To reduce the amount of stale computation, we propose a peeling sequence reordering algorithm in batch.

\begin{figure}
    \includegraphics[width=0.65\linewidth]{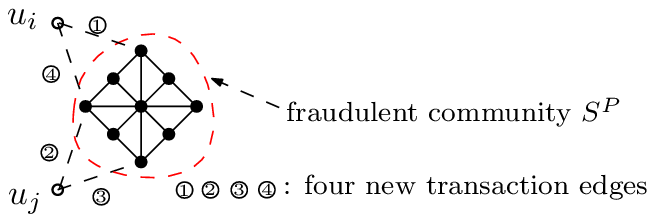}
    \caption{Illustration of stale incremental maintenance}\label{fig:stale}
\end{figure}

\begin{example}\label{eg:stale}
    Consider a fraudulent community, $S^P$, identified by the peeling algorithm in Figure~\ref{fig:stale}. $u_i$ and $u_j$ are two normal users. Suppose that they have the same peeling weight and that $u_i$ is peeled before $u_j$. When a new transaction \textcircled{\raisebox{-0.9pt}{1}} is generated, we should reorder $u_i$ and $u_j$ by exchanging their positions. When \textcircled{\raisebox{-0.9pt}{2}} and \textcircled{\raisebox{-0.9pt}{3}} are inserted, positions of $u_i$ and $u_j$ will be re-exchanged. However, if we reorder the sequence in batch with the last transaction \textcircled{\raisebox{-0.9pt}{4}}, we are not required to change the positions of $u_i$ and $u_j$.
\end{example}

\begin{algorithm}[tb]
    \caption{Peeling sequence reordering in batch}\label{algo:batch}
    \footnotesize
    \SetKwProg{Fn}{Function}{}{}
    \KwIn{Graph $G = (V, E)$, $\Seq$, density metric $g(S)$, $\Delta G = (\Delta V, \Delta E)$}
    \KwOut{Peeling sequence order $\Seq' = Q(G\oplus \Delta G)$ and fraudulent community}

    sort $\Delta V$ in the ascending order of indices in $\Seq$ and color $\Delta V$ black \\
    
    init a priority pending queue $T$ in the ascending order of peeling weights \\ \label{algo:batch:init}

    init an empty vector $\Seq'$ \\

    \For{$u_i = \Seq[i] \in \Delta V$}{
        add $u_i$ into $T$ \\\label{algo:batch:add1}
        color its neighbors $\Seq[j]$ ($j>i$) gray\\\label{algo:batch:add2}
        $k = i + 1$ \\
        \While{$T$ is not empty}{
         \eIf(\jiaxin{\tcp*[h]{\textbf{Case 1}}}){\label{algo:batch:case1b}
         $\Delta_{\min} < \Delta_k$}{ 
            pop $u_{\min}$ from $T$ and insert it to $\Seq'$ \\
                    update the priorities of $N(u_{\min})$ in $T$ \\\label{algo:batch:case1e}
            }{\label{algo:batch:case2ab}
                \eIf(\jiaxin{\tcp*[h]{\textbf{Case 2(a)}}}){$u_k$ is black or gray}{
                   add $u_k$ into $T$ and recover its peeling weight\\
                color its neighbors $N(u_{k})$ gray \\ \label{algo:batch:case2ae}
            }(\jiaxin{\tcp*[h]{\textbf{Case 2(b)}}: $u_k$ is white}){\label{algo:batch:case2bb}
                    insert $u_k$ to $\Seq'$\\
                }
                $k = k + 1$ \\\label{algo:batch:case2be}
            }
        }
        append $\Seq[k:i'-1]$ to $\Seq'$, where $u_{i'}=\Seq[i']$ is the next black vertex \\
    }

    \Return{$\Seq'$ \textnormal{and} $\arg\max_{S_i}g(S_i)$}
\end{algorithm}

\stitle{Peeling weight recovery.} Given a vertex $u_j = \Seq[j]$ and a set of vertex $S_i$ ($i < j$, \ie $S_j\subseteq S_i$), the peeling weight $w_{u_j}(S_i)$ can be calculated by $w_{u_j}(S_i) = \Delta_j + \sum_{(i \leq k < j) \bigwedge ((u_j,u_k)\in E)} c_{jk} + \sum_{(i \leq k < j) \bigwedge ((u_k,u_j)\in E)} c_{kj}$.

\stitle{Vertex sorting.} Intuitively, the increase in peeling weight of $u_i$ does not change the subsequence of $\Seq[1:i-1]$ due to Lemma~\ref{lemma:seq}. We sort the vertices in $\Delta V$ by the indices in the peeling sequence. Then we reorder the vertices in ascending order of the indices in $\Seq$. For simplicity, we color the vertices in $\Delta V$ black, affected vertices (\ie vertices pending reordering) gray and unaffected vertices white.

\stitle{Incremental maintenance in batch \jiaxin{(Algorithm~\ref{algo:batch} and Figure~\ref{fig:batch})}.} We initialize a pending queue $T$ to maintain the vertices pending reordering (Line~\ref{algo:batch:init}). Iteratively, we add the vertex $\Seq[i]\in \Delta V$ to $T$ and color its neighbors $\Seq[j]$ gray (Line~\ref{algo:batch:add1}-\ref{algo:batch:add2}). If $T$ is not empty, we compare the peeling weight $\Delta_k$ of the vertex $u_k = \Seq[k]$ ($k > i$) with the peeling weight $\Delta_{\min}$ of the head of $T$, $u_{\min}$. We consider the following \jiaxin{two cases as shown in Figure~\ref{fig:batch}}. \textbf{Case 1:} If $\Delta_{\min} < \Delta_k$, we pop $u_{\min}$ from $T$, insert it to $\Seq'$ and update the priorities of its neighbors in $T$ \jiaxin{(Line~\ref{algo:batch:case1b}-\ref{algo:batch:case1e})}; \textbf{Case 2(a):} if $\Delta_{\min} \geq \Delta_k$ and $u_k$ is gray or black, we recover its peeling weight in $S_k\cup T$ and insert it to $T$. Then we color the vertices in $N(u_k)$ gray \jiaxin{(Line~\ref{algo:batch:case2ab}-\ref{algo:batch:case2ae})}; otherwise \textbf{Case 2(b):} if $\Delta_{\min} \geq \Delta_k$ and $u_k$ is white, we insert $u_k$ to $\Seq'$ directly \jiaxin{(Line~\ref{algo:batch:case2bb}-\ref{algo:batch:case2be})}. We repeat the above procedure until the pending queue $T$ is empty. Then we append $\Seq[k:i'-1]$ to $\Seq'$, where $u_{i'}$ is the next vertex in $\Delta V$. We insert $u_{i'}$ into $T$ and repeat the reordering until there is no black vertex. \jiaxin{The correctness and accuracy guarantee are similar to those of peeling sequence reordering with edge insertion. Due to space limitations, we present them in Appendix~\ref{sec:correctness} of ~\cite{techreport}.}

\stitle{Complexity.} The time complexity of Algorithm~\ref{algo:batch} is $O(|E_{\mathcal{T}}|+|E_{\mathcal{T}}| $ $\log|V_{\mathcal{T}}|)$ which is bounded by $O(|E|\log|V|)$.

\subsection{Peeling sequence reordering with edge grouping}~\label{sec:urgent}
\stitle{Update steam $\Delta G^{\tau}$.} In a transaction system, the edge updates are coming in a stream manner (\ie a timestamp on each edge) which is denoted by $\Delta G^{\tau}$. Formally, we denote it by $\Delta G^{\tau} = [(e_0,\tau_0),\ldots (e_n, \tau_n)]$ where $\tau_i$ is the timestamp on the edge $e_i=(u_i,v_i)$.

\stitle{Latency of activities $\latency(\Delta G^{\tau})$.} Suppose that $e_i=(u_i,v_i)$ is a labeled fraudulent activity which is generated at $\tau_i$ and is responded/inserted at $\tau_i^r$. The latency of $e_i$ is $\tau_i^r - \tau_i$. Given an update stream $\Delta G^{\tau}$, the latency of fraudulent activities is defined as follows.
\begin{equation}\label{eq:latency}
    \latency(\Delta G^{\tau}) = \sum_{(e_i,\tau_i)\in \Delta G^{\tau}}{\tau_i^r - \tau_i}
\end{equation}

\stitle{Prevention ratio $\Ratio$.} If a fraudster is identified, we ban the following related  transactions to prevent economic loss. We denote the ratio of suspicious transactions prevented to all suspicious transactions by $\Ratio$.

\jiaxin{
\begin{example}
    Consider an update steam in Figure~\ref{fig:effectivepara}. $e_i$ ($i\in [1,6]$) are a set of labeled fraudulent transactions and $\tau_i$ ($i\in [1,6]$) are their timestamps. Regarding the reordering in batch, the new transactions are queueing until the size of the queue is equal to the batch size. The reordering is triggered at $\tau_s$ and finished at $\tau_f$. Therefore, they are inserted at $\tau_i^r = \tau_f$ The queueing time for each edge is $\tau_s - \tau_i$ while the latency is $\tau_f - \tau_i$. Suppose the fraudster is identified at $\tau_f$, the prevention ratio is $\Ratio = \frac{|\{e_i|\tau_i > \tau_f\}|}{|\{e_i\}|}$.
\end{example}
}

\Spade{} aims to reduce $\latency$ and increase $\Ratio$ as much as possible. \jiaxin{In Figure~\ref{fig:effectivepara}, if the reordering is triggered at $\tau_s=\tau_2$ and responded at $\tau_f=\tau_3$, the following fraudulent activities can be prevented.}
 
Intuitively, some transactions are generated by normal users (benign edges), while others are generated by potential fraudsters (urgent edges). \Spade{} groups the benign edges and reorders the peeling sequence in batch. It can both improve the performance of reordering and reduce the latency of the response to potential fraudulent transactions. We define the benign and urgent edges as follows.

\begin{definition}
    Given an edge $e=(u_i, u_j)$ and its weight $c_{ij}$, if $w_{u_i}(S_0) + c_{ij} \geq g(S^P)$ or $w_{u_j}(S_0) + c_{ij} \geq g(S^P)$, $e$ is an urgent edge; otherwise $e$ is a benign edge.
\end{definition}

\noindent\jiaxin{Given a benign edge insertion $(u_i,u_j)$, neither $u_i$ nor $u_j$ belongs to the densest subgraph (Lemma~\ref{lemma:optimalset}). And the insertion cannot produce a denser fraudulent community by peeling algorithms (Lemma~\ref{lemma:benign}).}

\eat{
\begin{lemma}\label{lemma:opt}
    If $\exists u\in S$, such that $w_{u}(S) < g(S^*)$, then $S\not = S^{*}$. 
\end{lemma}
}

\eat{
\begin{proof}
    
\end{proof}
}

\begin{lemma}\label{lemma:optimalset}
Given an edge $e = (u_i,u_j)$, if $e$ is a benign edge, after the insertion of $e$, $u_i\not \in S^{*}$ and $u_j\not \in S^{*}$.
\end{lemma}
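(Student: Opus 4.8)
The plan is to argue by contradiction, showing that if either endpoint of a benign edge $e=(u_i,u_j)$ lay in the post-insertion optimal set $S^*$, then that endpoint would carry enough peeling weight to force $e$ to be urgent. Write $G' = G\oplus\{e\}$ for the updated graph and let $S_0'$ be its full vertex set. Recall that benignity of $e$ means \emph{both} $w_{u_i}(S_0)+c_{ij} < g(S^P)$ and $w_{u_j}(S_0)+c_{ij} < g(S^P)$ hold, with $w$ and $g(S^P)$ measured on the original graph $G$, whereas $S^*$ and its peeling weights live on $G'$.

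First I would establish the optimality characterization that \emph{every} vertex $u\in S^*$ satisfies $w_u(S^*)\ge g(S^*)$ (both taken in $G'$). This follows because optimality of $S^*$ gives $g(S^*)\ge g(S^*\setminus\{u\})$; substituting $g(S^*\setminus\{u\}) = \frac{f(S^*)-w_u(S^*)}{|S^*|-1}$ and cross-multiplying collapses to $|S^*|\,w_u(S^*)\ge f(S^*)$, i.e.\ $w_u(S^*)\ge g(S^*)$ (the degenerate case $|S^*|=1$ holds with equality, since then $w_u(S^*)=a_u=g(S^*)$). I expect this to be the main obstacle, as it is the only place where the arithmetic-density structure is genuinely exploited; the remaining steps are monotonicity bookkeeping.

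Next I would lift this bound to the full graph. Since the peeling weight is monotone in the vertex set — enlarging $S$ only adds nonnegative edge terms to $w_u(S)$ — we have $w_{u_i}(S^*)\le w_{u_i}(S_0')$ for $u_i\in S^*$. Inserting $e$ raises the full-set peeling weight of each endpoint by exactly its edge weight, so $w_{u_i}(S_0') = w_{u_i}(S_0)+c_{ij}$, and symmetrically for $u_j$. Finally, because adding a nonnegative-weight edge can only increase $f$, the densest density does not drop: since $S^*$ is densest in $G'$, $g(S^*)=g_{G'}(S^*)\ge g_{G'}(S^P)\ge g_{G}(S^P)=g(S^P)$.

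Chaining these, suppose for contradiction that $u_i\in S^*$. Then $w_{u_i}(S_0)+c_{ij} = w_{u_i}(S_0')\ge w_{u_i}(S^*)\ge g(S^*)\ge g(S^P)$, which is precisely the condition making $e$ urgent and contradicts benignity; hence $u_i\notin S^*$. The identical chain applied to $u_j$ gives $u_j\notin S^*$, completing the proof. One boundary case I would check explicitly is when an endpoint is a freshly inserted vertex, so that $w_{u_i}(S_0)$ reduces to the vertex weight $a_i$; the same inequalities carry through unchanged.
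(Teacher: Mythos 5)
Your proof is correct and takes essentially the same route as the paper's: assume an endpoint lies in $S^*$, then chain the monotonicity bound $w_{u_i}(S^*) \le w_{u_i}(S_0) + c_{ij}$, the benign-edge condition, and the optimality characterization that every $u \in S^*$ satisfies $w_u(S^*) \ge g(S^*)$ (which is exactly the contrapositive of the paper's auxiliary lemma on optimal sets) to reach a contradiction. If anything, your write-up is slightly more careful than the paper's, since you explicitly justify the cross-graph comparison $g(S^P) \le g(S^*)$ and the degenerate case $|S^*| = 1$, both of which the paper leaves implicit.
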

\eat{
\begin{proof}
    We prove this lemma in contradiction by assuming that $u_i\in S^{*}$. $w_{u_i}(S^*) \leq w_{u_i}(S_0) + c_{ij} < g(S^P) \leq g(S^*)$. We have $S^*\not = S^*$ due to Lemma~\ref{lemma:opt}. We can conclude that $u_i\not\in S^*$. Similarly, $u_j\not \in S^*$.
\end{proof}
}

\eat{
\begin{lemma}\label{lemma:opt2}
    If $\exists u\in S_i$, $w_{u}(S_i) < g(S_i)$, then $S_i\not = S^{P}$.
\end{lemma}
}

\eat{
\begin{proof}
We prove this in contradiction by assuming that $S_i= S^P$. Suppose that $u_i$ is peeled from $S_i$. Hence, $w_{u_i}(S^P) \leq w_{u}(S^P)$ due to the peeling definition. The proof can be obtained as follows:
\begin{equation}
\footnotesize
    \begin{split}
        g(S^P \setminus \{u_i\}) & = \frac{f(S^{P}) - w_{u_i}(S^{P})}{|S^{P}|-1} > \frac{f(S^P) - w_{u}(S^P)}{|S^P|-1} > g(S^P)   
    \end{split}
\end{equation}
This contradicts the fact that $S^P$ has the highest density. We can conclude that $S_i \not = S^{P}$.
\end{proof}
}

We denote the vertex subset returned after reordering by $S^{P'}$.

\begin{lemma}\label{lemma:benign}
Given a benign edge $e = (u_i,u_j)$ insertion, at least one of the following two conditions is established: 1) $u_i\not \in S^{P'}$ and $u_j\not \in S^{P'}$; and 2) $g(S^{P'}) < g(S^P)$.
\end{lemma}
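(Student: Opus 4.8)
The plan is to argue by a case split on whether the endpoints of the benign edge survive in the reordered peeling output $S^{P'} = Q(G\oplus\Delta G)$. If $u_i\notin S^{P'}$ and $u_j\notin S^{P'}$, then condition (1) holds immediately and we are done. So essentially all the work lies in the remaining case, where (without loss of generality) $u_i\in S^{P'}$; here I aim to establish condition (2), namely $g(S^{P'}) < g(S^P)$.

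The first key step is to convert the benign hypothesis into a bound on the peeling weight of $u_i$ \emph{inside} $S^{P'}$. By definition of a benign edge, $w_{u_i}(S_0)+c_{ij} < g(S^P)$, where $w_{u_i}(S_0)$ is taken in $G$; since $G\oplus\Delta G$ differs from $G$ only by the edge $(u_i,u_j)$, this sum equals the peeling weight of $u_i$ with respect to the full vertex set $V\cup\Delta V$ in the updated graph. Because every vertex and edge weight is nonnegative, the peeling weight is monotone in the vertex set: for any $S$ with $u_i\in S\subseteq V\cup\Delta V$ we have $w_{u_i}(S)\le w_{u_i}(V\cup\Delta V) = w_{u_i}(S_0)+c_{ij}$. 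Taking $S=S^{P'}$ gives $w_{u_i}(S^{P'}) < g(S^P)$, so $S^{P'}$ contains a vertex whose peeling weight (evaluated in $G\oplus\Delta G$) is strictly below $g(S^P)$.

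The second step is a contradiction argument showing that such a low-weight vertex inside the densest peeling set forces $g(S^{P'})<g(S^P)$. Suppose instead that $g(S^{P'})\ge g(S^P)$. Let $v$ be the vertex of smallest peeling weight in $S^{P'}$, so $w_v(S^{P'})\le w_{u_i}(S^{P'}) < g(S^P)\le g(S^{P'})$. By construction of the peeling sequence, $v$ is exactly the vertex removed from $S^{P'}$ at the next peeling step, hence $S^{P'}\setminus\{v\}$ is the successor set in that sequence. Using the elementary inequality $\frac{f(S^{P'})-w_v(S^{P'})}{|S^{P'}|-1} > \frac{f(S^{P'})}{|S^{P'}|}$, which is equivalent to $w_v(S^{P'}) < g(S^{P'})$, we obtain $g(S^{P'}\setminus\{v\}) > g(S^{P'})$. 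This contradicts the fact that $S^{P'}$ is the maximum-density set over the entire peeling sequence on $G\oplus\Delta G$. Therefore the supposition fails and $g(S^{P'}) < g(S^P)$, establishing condition (2).

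The main obstacle is bookkeeping rather than a deep idea: I must be careful that all peeling weights and densities are evaluated in the updated graph $G\oplus\Delta G$, so that the benign bound survives the monotonicity step, and I must dispose of the degenerate case $|S^{P'}|=1$, where the successor set is empty and the displayed fraction is undefined. In that case $S^{P'}=\{u_i\}$, so $g(S^{P'})=w_{u_i}(S^{P'}) < g(S^P)$ directly from the benign bound and condition (2) holds with no division by zero. This argument runs parallel to the reasoning behind Lemma~\ref{lemma:optimalset} for the optimal set $S^*$, but is carried out for the peeling output $S^{P'}$ and leans on the maximality of $S^{P'}$ within the peeling sequence rather than on global optimality.
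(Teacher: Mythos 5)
Your proof is correct and takes essentially the same route as the paper's: the benign bound combined with monotonicity of peeling weights under vertex-set inclusion (the paper's Lemma~\ref{lemma:subset}) yields $w_{u_i}(S^{P'}) < g(S^P)$, and your inline contradiction step—removing the minimum-weight vertex strictly increases density, contradicting maximality of $S^{P'}$ over the peeling sequence—is exactly the paper's Lemma~\ref{lemma:opt2}. Your explicit bookkeeping that all weights are evaluated in $G\oplus\Delta G$ and your handling of the degenerate case $|S^{P'}|=1$ are minor refinements the paper leaves implicit, not a different approach.
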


Therefore, we postpone the incremental maintenance of the peeling sequence for benign edges which provide two benefits. First, we can perform a batch update that avoids stale computation. Second, an urgent edge insertion, which is caused by a potential fraudster, triggers incremental maintenance immediately. These fraudsters are identified and reported to the moderators in real time.

\begin{figure}[tb]
    \includegraphics[width=0.75\linewidth]{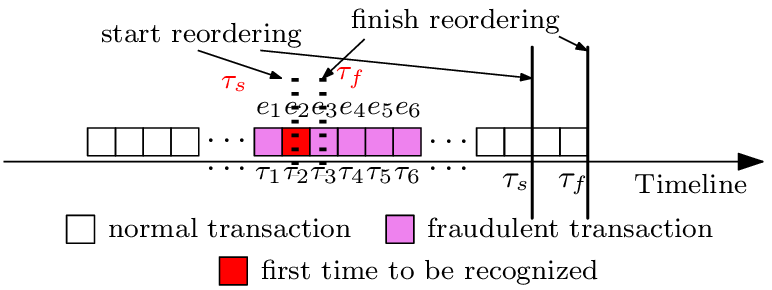}
    \caption{\small \jiaxin{Metrics for a set of fraudulent transactions made by a fraudster (latency: $\tau_f-\tau_i$, queueing time: $\tau_s-\tau_i$, prevention ratio: $\Ratio = \frac{|\{e_i|\tau_i > \tau_f\}|}{|\{e_i\}|}$)}}\label{fig:effectivepara}
\end{figure}

\stitle{Edge grouping.} We next present the paradigm of peeling sequence reordering by edge grouping. We first initialize an empty buffer $\Delta G$ for the updates (Line~\ref{algo:urgent:init}). When an edge $e_i$ enters, we insert it into $\Delta G$. If $e_i$ is an urgent edge, we incrementally maintain the peeling sequence by Algorithm~\ref{algo:batch} and clear the buffer (Line~\ref{algo:urgent:begin}-\ref{algo:urgent:end}).

\begin{algorithm}[tb]
    \caption{Paradigm of edge grouping}\label{algo:urgent}
    \footnotesize
    \SetKwProg{Fn}{Function}{}{}
    \KwIn{A graph $G = (V, E)$, $\Seq$, a density metric $g(S)$, $\Delta G^T$}
    \KwOut{Peeling sequence order $\Seq' = Q(G\oplus \Delta G^T)$ and fraudulent community}

    init an empty buffer $\Delta G$ for updates  \\ \label{algo:urgent:init}
    \For{$i=1,\ldots, m$}{
        $\Delta G.\mathsf{add}$($e_i$) \\
         \If{$e_i$ is an urgent edge}{\label{algo:urgent:begin}
            $\Seq' = Q(G\oplus \Delta G)$ by Algorithm~\ref{algo:batch} \\
            clear $\Delta G$ \label{algo:urgent:end}
         }    
    }
    \Return{$\Seq'$ \textnormal{and} $\arg\max_{S_i}g(S_i)$}
\end{algorithm}

\eat{
\stitle{\ref{sec:Spade}.2 Approximation ratio perserving}
Another approach is that we maintain the same approximation ratio of $\Fraudar$ rather than reorder the sequence $\Fraudar$.
\idea{\cite{tsourakakis2015k} -- peeling algorithms in batch?}
}

\section{Experimental Evaluation}\label{sec:exp}

\eat{Evaluations are classified into two groups: the overall improvement in performance of \Spade{} (Section~\ref{sec:efficiency}) and the effectiveness of \Spade{} in preventing fraudulent transactions (Section~\ref{sec:effectiveness}).}

Our experiments are run on a machine that has an X5650 CPU, $16$ GB RAM. The implementation is made memory-resident and implemented in C++. All codes are compiled by GCC-9.3.0 with -$O3$.

\begin{table}[tb]
\centering
\begin{scriptsize}
\caption{Statistics of real-world datasets}\label{table:Statistics}
\begin{tabular}{|c|c|c|c|c|c|c|}
  \hline
  {\bf Datasets} & {\bf $|V|$} & {\bf $|E|$}  & avg. degree   & Increments & Type \\
  \hline
    Grab1 & 3.991M & 10M & 5.011 & 1M & Transaction\\
  \hline
  Grab2 & 4.805M & 15M & 6.243 & 1.5M & Transaction\\
  \hline
  Grab3 & 5.433M & 20M & 7.366 & 2M & Transaction\\
  \hline
  Grab4 & 6.023M & 25M & 8.302  & 2.5M & Transaction\\
  \hline
  Amazon~\cite{mcauley2013hidden} & 28K & 28K & 2    & 2.8K & Review \\
  \hline
    Wiki-vote~\cite{leskovec2010signed} & 16K & 103K & 12.88  & 10.3K & Vote\\
  \hline
  Epinion~\cite{leskovec2010signed} & 264K & 841K & 6.37  & 84.1K & Who-trust-whom\\
  \hline
\end{tabular}
\end{scriptsize}
\end{table}

\begin{table*}[tb]
\begin{footnotesize}
\setlength{\tabcolsep}{0.4em}
\begin{tabular}{|c|ccc||ccc|ccc|ccc|ccc|ccc|}
\hline 
                                  & \multicolumn{3}{c||}{Peeling algorithms (seconds)}                                                                  & \multicolumn{3}{c|}{ $|\Delta E| = 1$ ($us$)}                                                           & \multicolumn{3}{c|}{$|\Delta E| = 10$ ($us$)}                                                          & \multicolumn{3}{c|}{$|\Delta E| = 100$ ($us$)}                                                         & \multicolumn{3}{c|}{$|\Delta E| = 1$K ($us$)}                                                        & \multicolumn{3}{c|}{$|\Delta E| = 100$K ($us$)}                                                       \\ \hline \hline
\textbf{Datasets} & \multicolumn{1}{c|}{$\DENG$} & \multicolumn{1}{c|}{$\DENGW$} & $\Fraudar$ & \multicolumn{1}{c|}{\IncDENG} & \multicolumn{1}{c|}{\IncDENGW} & \IncFraudar & \multicolumn{1}{c|}{\IncDENG} & \multicolumn{1}{c|}{\IncDENGW} & \IncFraudar & \multicolumn{1}{c|}{\IncDENG} & \multicolumn{1}{c|}{\IncDENGW} & \IncFraudar & \multicolumn{1}{c|}{\IncDENG} & \multicolumn{1}{c|}{\IncDENGW} & \IncFraudar & \multicolumn{1}{c|}{\IncDENG} & \multicolumn{1}{c|}{\IncDENGW} & \IncFraudar \\ \hline
Grab1                             & \multicolumn{1}{c|}{$12$}               & \multicolumn{1}{c|}{$14$}               & $12$               & \multicolumn{1}{c|}{6517}               & \multicolumn{1}{c|}{17469}              & 6                  & \multicolumn{1}{c|}{3117}               & \multicolumn{1}{c|}{11613}              & 6                  & \multicolumn{1}{c|}{519}                & \multicolumn{1}{c|}{1983}               & 6                  & \multicolumn{1}{c|}{108}                & \multicolumn{1}{c|}{281}                & 6                  & \multicolumn{1}{c|}{5}                 & \multicolumn{1}{c|}{10}                 & 1                      \\ \hline
Grab2                             & \multicolumn{1}{c|}{17}                 & \multicolumn{1}{c|}{20}                 & 16                 & \multicolumn{1}{c|}{6604}               & \multicolumn{1}{c|}{18413}              & 8                  & \multicolumn{1}{c|}{3484}               & \multicolumn{1}{c|}{11280}              & 8                  & \multicolumn{1}{c|}{634}                & \multicolumn{1}{c|}{1782}               & 8                  & \multicolumn{1}{c|}{138}                & \multicolumn{1}{c|}{249}                & 8                  & \multicolumn{1}{c|}{7}                 & \multicolumn{1}{c|}{8}                 & 2                                  \\ \hline
Grab3                             & \multicolumn{1}{c|}{23}                 & \multicolumn{1}{c|}{27}                 & 22                 & \multicolumn{1}{c|}{6716}               & \multicolumn{1}{c|}{18862}              & 11                 & \multicolumn{1}{c|}{3864}               & \multicolumn{1}{c|}{10892}              & 11                 & \multicolumn{1}{c|}{750}                & \multicolumn{1}{c|}{1560}               & 10                 & \multicolumn{1}{c|}{186}                & \multicolumn{1}{c|}{211}                & 10                 & \multicolumn{1}{c|}{8}                 & \multicolumn{1}{c|}{7}                 & 2                                  \\ \hline
Grab4                             & \multicolumn{1}{c|}{27}                 & \multicolumn{1}{c|}{28}                 & 28                 & \multicolumn{1}{c|}{6562}               & \multicolumn{1}{c|}{17469}              & 14                 & \multicolumn{1}{c|}{4108}               & \multicolumn{1}{c|}{11661}              & 12                 & \multicolumn{1}{c|}{878}                & \multicolumn{1}{c|}{1970}               & 13                 & \multicolumn{1}{c|}{206}                & \multicolumn{1}{c|}{267}                & 12                 & \multicolumn{1}{c|}{10}                 & \multicolumn{1}{c|}{9}                 & 3                             \\ \hline
Amazon                            & \multicolumn{1}{c|}{0.49}               & \multicolumn{1}{c|}{0.53}               & 0.43               & \multicolumn{1}{c|}{350}                & \multicolumn{1}{c|}{342}                & 1                  & \multicolumn{1}{c|}{186}                & \multicolumn{1}{c|}{191}                & -                  & \multicolumn{1}{c|}{29}                 & \multicolumn{1}{c|}{30}                 & -                  & \multicolumn{1}{c|}{7}                  & \multicolumn{1}{c|}{6}                  & -                  & \multicolumn{1}{c|}{-}                  & \multicolumn{1}{c|}{-}                  & -                             \\ \hline
Wiki-Vote                         & \multicolumn{1}{c|}{0.022}              & \multicolumn{1}{c|}{0.021}              & 0.017              & \multicolumn{1}{c|}{184}                & \multicolumn{1}{c|}{149}                & 2                  & \multicolumn{1}{c|}{98}                 & \multicolumn{1}{c|}{84}                 & 1                  & \multicolumn{1}{c|}{29}                 & \multicolumn{1}{c|}{28}                 & 1                  & \multicolumn{1}{c|}{5}                  & \multicolumn{1}{c|}{5}                  & -                  & \multicolumn{1}{c|}{-}                  & \multicolumn{1}{c|}{-}                  & -                             \\ \hline
Epinion                           & \multicolumn{1}{c|}{0.25}               & \multicolumn{1}{c|}{0.26}               & 0.23               & \multicolumn{1}{c|}{170}                & \multicolumn{1}{c|}{151}                & 5                  & \multicolumn{1}{c|}{83}                 & \multicolumn{1}{c|}{80}                 & 3                  & \multicolumn{1}{c|}{32}                 & \multicolumn{1}{c|}{30}                 & 2                  & \multicolumn{1}{c|}{10}                 & \multicolumn{1}{c|}{10}                 & 2                  & \multicolumn{1}{c|}{1}                  & \multicolumn{1}{c|}{1}                  & -                                  \\ \hline
\end{tabular}
\end{footnotesize}
\caption{\jiaxin{Time taken for incremental maintenance with \Spade{} by varying batch sizes (avg. time for one edge, - means $ < 1us$)}}\label{tab:batchsize}
\end{table*}

\begin{table*}[tb]
\begin{footnotesize}
\begin{tabular}{|c|cccccc||cccccc|cccccc|}
\hline
                        & \multicolumn{6}{c||}{Peeling algorithms (seconds)}                                                                                                                                                 & \multicolumn{6}{c|}{$|\Delta E| = 1$K ($us$)}                                                                                                                    & \multicolumn{6}{c|}{Edge grouping ($us$)}                                                                                                                        \\ \hline \hline
\multirow{2}{*}{\textbf{Datast}} & \multicolumn{2}{c|}{$\DENG$}                                        & \multicolumn{2}{c|}{$\DENGW$}                                        & \multicolumn{2}{c||}{$\Fraudar$}                   & \multicolumn{2}{c|}{\IncDENG}                                        & \multicolumn{2}{c|}{\IncDENGW}                                        & \multicolumn{2}{c|}{\IncFraudar}                   & \multicolumn{2}{c|}{\IncDENGU}                                        & \multicolumn{2}{c|}{\IncDENGWU}                                        & \multicolumn{2}{c|}{\IncFraudarU}                   \\ \cline{2-19} 
                        & \multicolumn{1}{c|}{$\Elapsed$} & \multicolumn{1}{c|}{$\latency$} & \multicolumn{1}{c|}{$\Elapsed$} & \multicolumn{1}{c|}{$\latency$} & \multicolumn{1}{c|}{$\Elapsed$} & $\latency$ & \multicolumn{1}{c|}{$\Elapsed$} & \multicolumn{1}{c|}{$\latency$} & \multicolumn{1}{c|}{$\Elapsed$} & \multicolumn{1}{c|}{$\latency$} & \multicolumn{1}{c|}{$\Elapsed$} & $\latency$ & \multicolumn{1}{c|}{$\Elapsed$} & \multicolumn{1}{c|}{$\latency$} & \multicolumn{1}{c|}{$\Elapsed$} & \multicolumn{1}{c|}{$\latency$} & \multicolumn{1}{c|}{$\Elapsed$} & $\latency$ \\ \hline
Grab1                   & \multicolumn{1}{c|}{12}      & \multicolumn{1}{c|}{1}       & \multicolumn{1}{c|}{14}      & \multicolumn{1}{c|}{1}       & \multicolumn{1}{c|}{12}      & 1       & \multicolumn{1}{c|}{108}     & \multicolumn{1}{c|}{2.93}       & \multicolumn{1}{c|}{281}       & \multicolumn{1}{c|}{2.51}       & \multicolumn{1}{c|}{6}     & 2.93       & \multicolumn{1}{c|}{24}     & \multicolumn{1}{c|}{0.024}       & \multicolumn{1}{c|}{29}      & \multicolumn{1}{c|}{0.029}       & \multicolumn{1}{c|}{\textbf{5}}     & 0.0042       \\ \hline
Grab2                   & \multicolumn{1}{c|}{17}      & \multicolumn{1}{c|}{1}       & \multicolumn{1}{c|}{20}       & \multicolumn{1}{c|}{1}       & \multicolumn{1}{c|}{16}      & 1       & \multicolumn{1}{c|}{138}     & \multicolumn{1}{c|}{1.37}       & \multicolumn{1}{c|}{249}       & \multicolumn{1}{c|}{1.21}       & \multicolumn{1}{c|}{8}     & 1.43       & \multicolumn{1}{c|}{28}     & \multicolumn{1}{c|}{0.028}       & \multicolumn{1}{c|}{32}      & \multicolumn{1}{c|}{0.032}       & \multicolumn{1}{c|}{\textbf{7}}     &    0.0050    \\ \hline
Grab3                   & \multicolumn{1}{c|}{23}      & \multicolumn{1}{c|}{1}       & \multicolumn{1}{c|}{27}      & \multicolumn{1}{c|}{1}       & \multicolumn{1}{c|}{22}      & 1       & \multicolumn{1}{c|}{186}      & \multicolumn{1}{c|}{0.98}       & \multicolumn{1}{c|}{211}       & \multicolumn{1}{c|}{0.87}       & \multicolumn{1}{c|}{10}      & 1.03       & \multicolumn{1}{c|}{28}     & \multicolumn{1}{c|}{0.028}       & \multicolumn{1}{c|}{29}      & \multicolumn{1}{c|}{0.019}       & \multicolumn{1}{c|}{\textbf{8}}     & 0.0066       \\ \hline
Grab4                   & \multicolumn{1}{c|}{27}      & \multicolumn{1}{c|}{1}       & \multicolumn{1}{c|}{28}      & \multicolumn{1}{c|}{1}       & \multicolumn{1}{c|}{28}      & 1       & \multicolumn{1}{c|}{206}      & \multicolumn{1}{c|}{0.76}       & \multicolumn{1}{c|}{211}       & \multicolumn{1}{c|}{0.74}       & \multicolumn{1}{c|}{10}      & 0.76       & \multicolumn{1}{c|}{29}     & \multicolumn{1}{c|}{0.029}       & \multicolumn{1}{c|}{33}      & \multicolumn{1}{c|}{0.024}       & \multicolumn{1}{c|}{\textbf{10}}     & 0.0073       \\ \hline
\end{tabular}
\end{footnotesize}
\caption{Elapsed time ($\Elapsed$) and latency ($\latency$) of static algorithms, incremental algorithms and edge grouping ($\Elapsed$: The average elapsed time for one edge; $\latency$ is defined by Equation~\ref{eq:latency}.  $\latency$ of \IncDENG{} (resp. \IncDENGW{} and \IncFraudar) is normalized to $\latency$ of $\DENG$ (resp. $\DENGW$ and $\Fraudar$))}\label{tab:urgent}
\end{table*}

\stitle{Datasets.} We conduct the experiments on seven datasets (Table~\ref{table:Statistics}). Four industrial datasets are from $\Grab$ (Grab1-Grab4). \jiaxin{Given a set of transactions, each transaction is represented as an edge. We replay the edges in the \jiaxin{increasing} order of their timestamp. If a user $u_i$ purchases from a store $u_j$, we add an edge $(u_i,u_j)$ to $E$.} Specifically, we construct the graph $G$ as initialization \jiaxin{($V$ and $90\%$ of $E$ as the initial graph), and the remaining $10\%$ of $E$ as increments for testing.} The increments are decomposed into a set of graph updates $\Delta G$ in the increasing order of their timestamp with different batch sizes $|\Delta E|$. We also use three popular open datasets including Amazon~\cite{mcauley2013hidden}, Wiki-vote~\cite{leskovec2010signed} and Epinion~\cite{leskovec2010signed}. Since there are no timestamps on these three datasets, we randomly select $10\%$ edges from $E$ as increments for evaluation.

\stitle{Competitors.} We choose three common peeling algorithms ($\DENG$, $\DENGW$ and $\Fraudar$) as a baseline. \jiaxin{Given an edge insertion, these algorithms identify the fraudulent community on the entire graph from scratch.} We demonstrate the performance improvement of our proposal (\IncDENG{}, \IncDENGW{} and \IncFraudar{}) implemented in \Spade{}. We denote batch updates by \IncDENG{}-$x$, \IncDENGW{}-$x$ and \IncFraudar{}-$x$, where $x =|\Delta E|$ is the batch size. We also denote the reordering of the peeling sequence with edge grouping by \IncDENGU{}, \IncDENGWU{} and \IncFraudarU{}.

\subsection{Efficiency of \Spade{}}\label{sec:efficiency}

\begin{figure}[tb]
    \subcaptionbox{\jiaxin{Prevention ratio vs. latency\label{fig:prevent}}}{\includegraphics[width=0.41\linewidth]{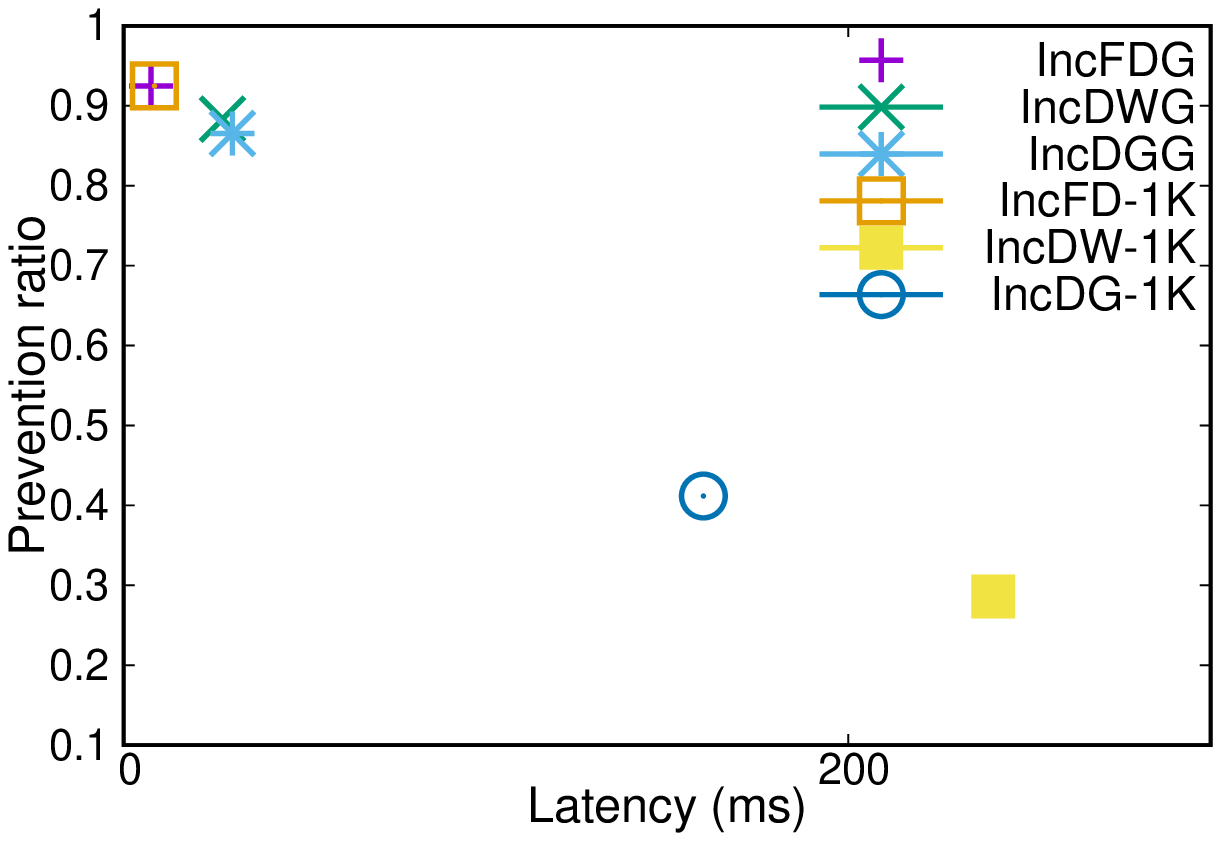}}
    \subcaptionbox{Graph degree distribution\label{fig:distribution}}{\includegraphics[width=0.41\linewidth]{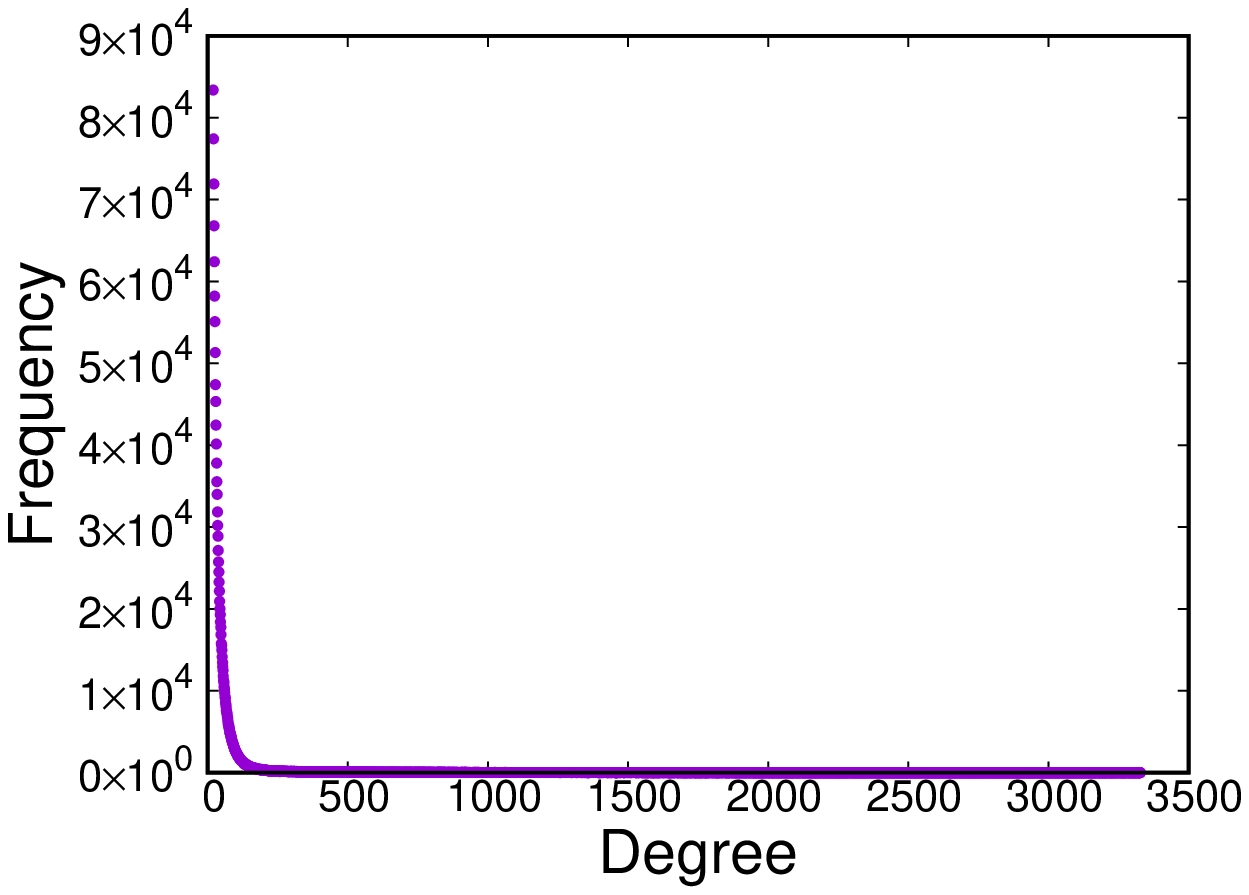}}
    \caption{Graph characteristic}\label{fig:characteristic}
\end{figure}

\begin{figure}[tb]
    \includegraphics[width=0.5\linewidth]{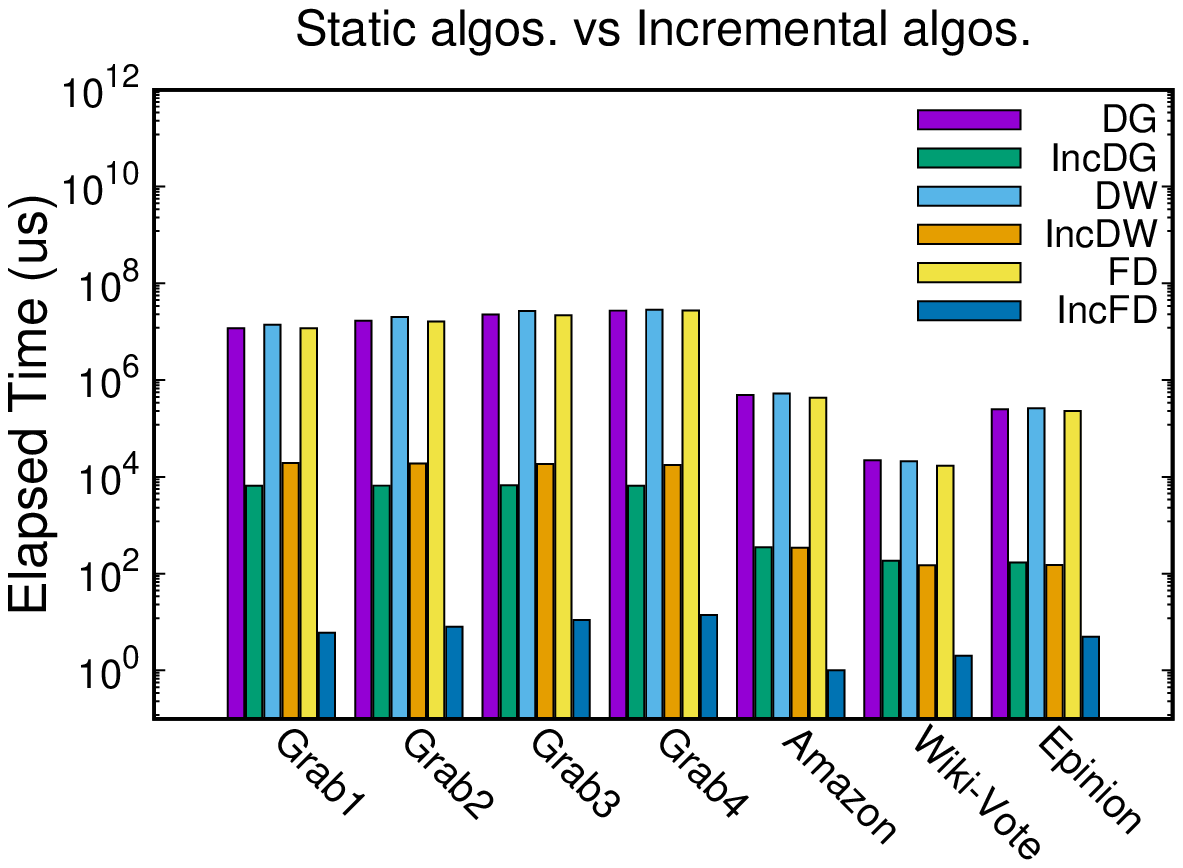}
    \caption{Efficiency comparison between peeling algorithms and corresponding incremental versions on \Spade{} ($|\Delta E| = 1$)}\label{fig:runtime}
\end{figure}

\stitle{Improvement of incremental peeling algorithms.} We first investigate the efficiency of \Spade{} by comparing the performance between incremental peeling algorithms and peeling algorithms. In Figure~\ref{fig:runtime}, our experiments show that \IncDENG{} (resp. \IncDENGW{} and \IncFraudar) is up to $4.17\times 10^3$ (resp. $1.63\times 10^3$ and $1.96\times 10^6$) times faster than $\DENG{}$ (resp. $\DENGW{}$ and $\Fraudar$) with an edge insertion. The reason \jiaxin{for such a significant speedup} is that only a small part of the peeling sequence is affected for most edge insertions. This is also consistent with the time complexity comparison of those algorithms. In fact, our algorithm on average processes only $3.5\times 10^{-4}$, $7.2\times 10^{-4}$ and $2.5\times 10^{-7}$ of edges compared with DG, DW and FD (on the entire graph), respectively. \Spade{} identifies and maintains the affected peeling subsequence rather than recomputes the peeling sequence from scratch. \jiaxin{Thus, \Spade{} significantly outperforms existing algorithms.}

\stitle{Impact of batch sizes $|\Delta E|$.} We evaluate the efficiency of batch updates by varying batch sizes $|\Delta E|$ from $1$ to $100$K. As shown in Table~\ref{tab:batchsize}, \IncDENG-$100$K (resp. \IncDENGW{}-$100$K and \IncFraudar-$100$K) is up to $1211$ (resp. $3448$ and $4.47$) times faster than \IncDENG{} (resp. \IncDENGW{} and \IncFraudar{}). When the batch size increases, the average elapsed time for an edge insertion keeps decreasing. \jiaxin{As indicated in Example~\ref{eg:stale}, the reordering of the peeling sequence by early edge insertions could be reversed by later ones. Reordering the peeling sequence in batch avoids such stale incremental maintenance by reducing the reversal. \tr{With batch updates, \IncDENG-$100$K (resp. \IncDENGW{}-$100$K and \IncFraudar-$100$K) is up to $2.86\times 10^6$ (resp. $3.21\times 10^6$ and $8.8\times 10^6$) times faster than $\DENG$ (resp. $\DENGW$ and $\Fraudar$).}}

\stitle{Impact of edge grouping.} As shown in Table~\ref{tab:urgent}, \IncDENGU{} (resp. \IncDENGWU{} and \IncFraudarU{}) is up to $7.1$ (resp. $9.7$ and $1.25$) times faster than \IncDENG{}-$1$K (resp. \IncDENGW{}-$1$K and \IncFraudar{}-$1$K) since the edge grouping technique generally accumulates more than $1$K edges. Another evidence is that the graph follows the power law, as shown in Figure~\ref{fig:distribution}. Most edge insertions are benign and are processed in batch.

\stitle{Scalability.} We next evaluate the scalability of \Spade{} on $\Grab$' s datasets \jiaxin{(Grab1-Grab4)} of different sizes which is controlled by the number of edges $|E|$. We vary $|E|$ from $10$M to $25$M \jiaxin{as shown in Table~\ref{table:Statistics}} and report the results in Table~\ref{tab:batchsize}. All peeling algorithms scale reasonably well with the increase of $|E|$. With $|E|$ increasing by $2.5$ times, the running time of \Spade{} increases by up to $2$ (resp. $2$ and $3$) times for $\DENG{}$ (resp. $\DENGW{}$ and $\Fraudar$).

We also compare the efficiency of $\DENG$, $\DENGW$ and $\Fraudar$. As shown in Columns $2\sim 4$ of Table~\ref{tab:batchsize}, the peeling algorithms have a similar performance. However, \IncFraudar{} is much faster than \IncDENG{} and \IncDENGW{} since the affected peeling subsequence is smaller due to the suspiciousness function of $\Fraudar$~\cite{hooi2016fraudar}.

\begin{figure}[tb]
\begin{minipage}[t]{.14\textwidth}
    \subcaptionbox{\IncDENG}{\includegraphics[width=1.2\linewidth]{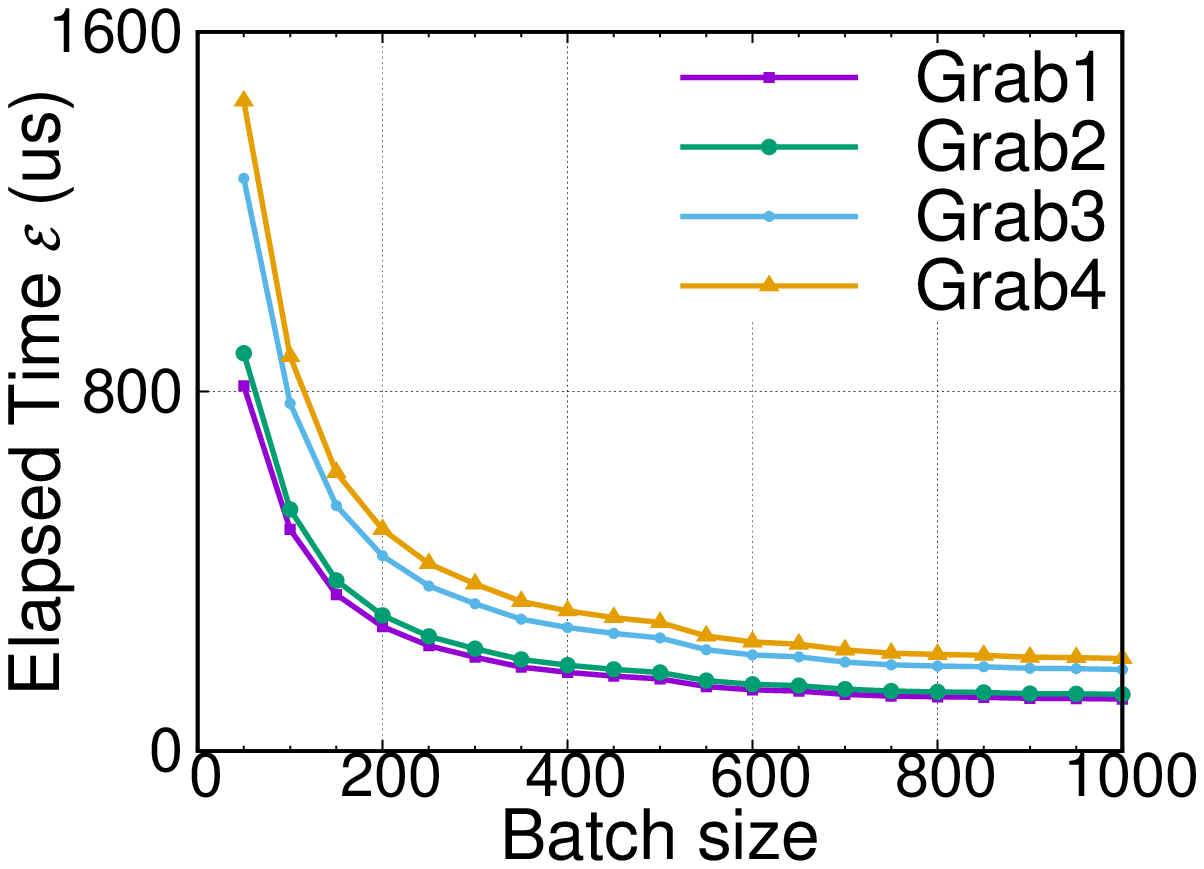}}
\end{minipage}
\hfill\hfill
\begin{minipage}[t]{.14\textwidth}
    \subcaptionbox{\IncDENGW{}}{\includegraphics[width=1.2\linewidth]{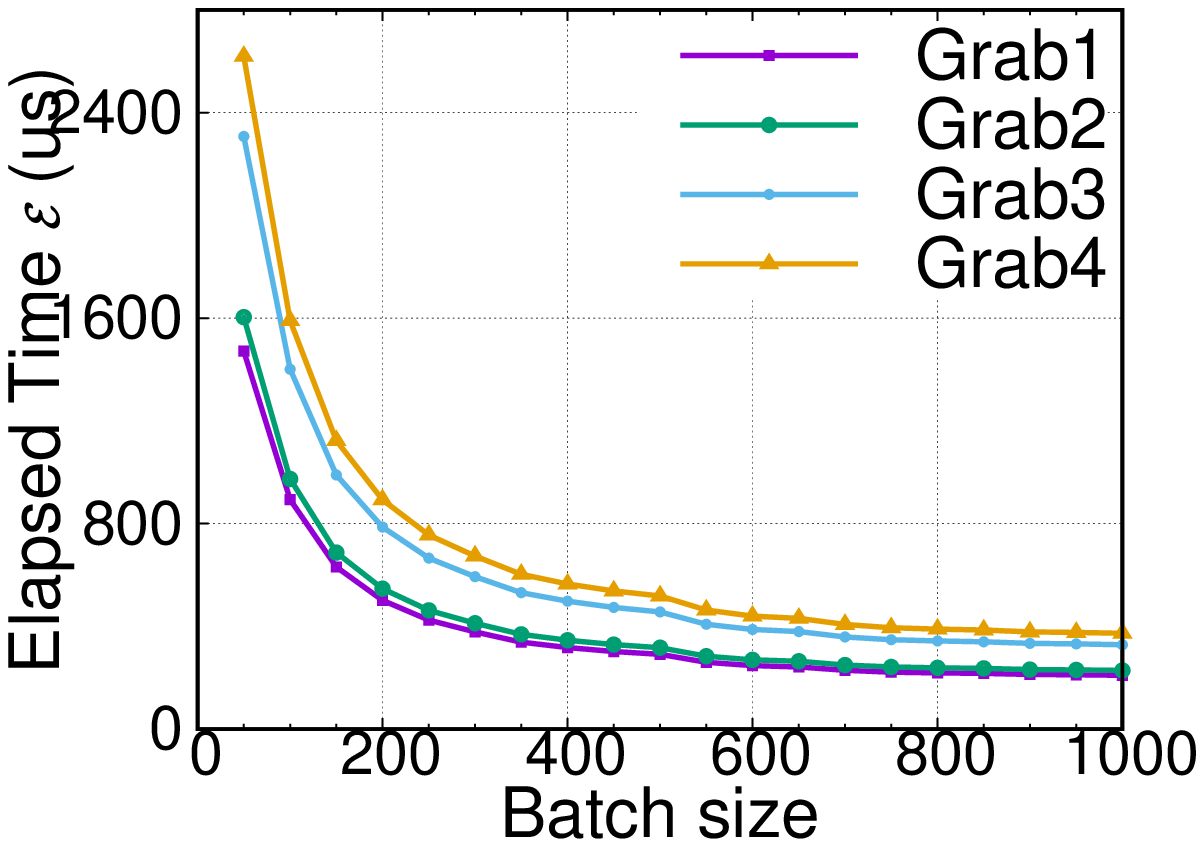}}
\end{minipage}
\hfill\hfill
\begin{minipage}[t]{.14\textwidth}
    \subcaptionbox{\IncFraudar}{\includegraphics[width=1.2\linewidth]{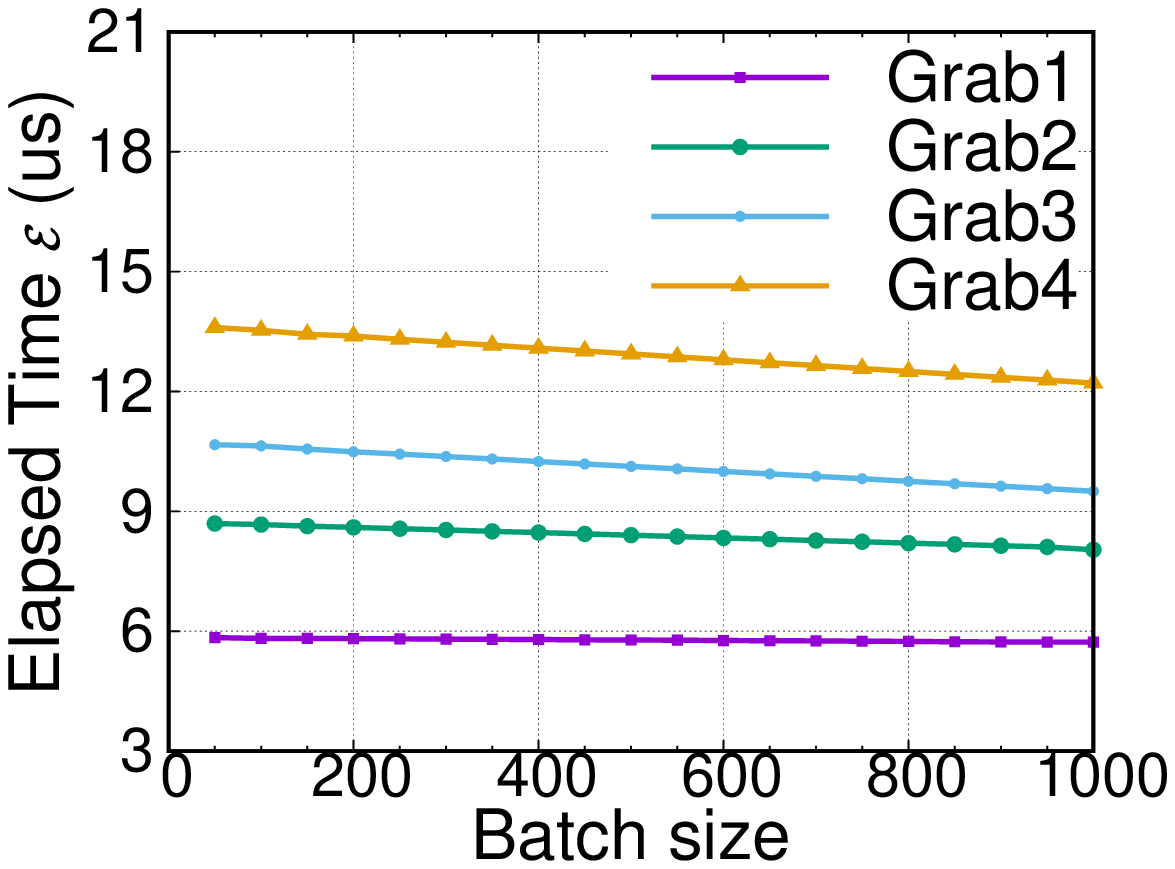}}
\end{minipage}

\begin{minipage}[t]{.14\textwidth}
    \subcaptionbox{\IncDENG}{\includegraphics[width=1.2\linewidth]{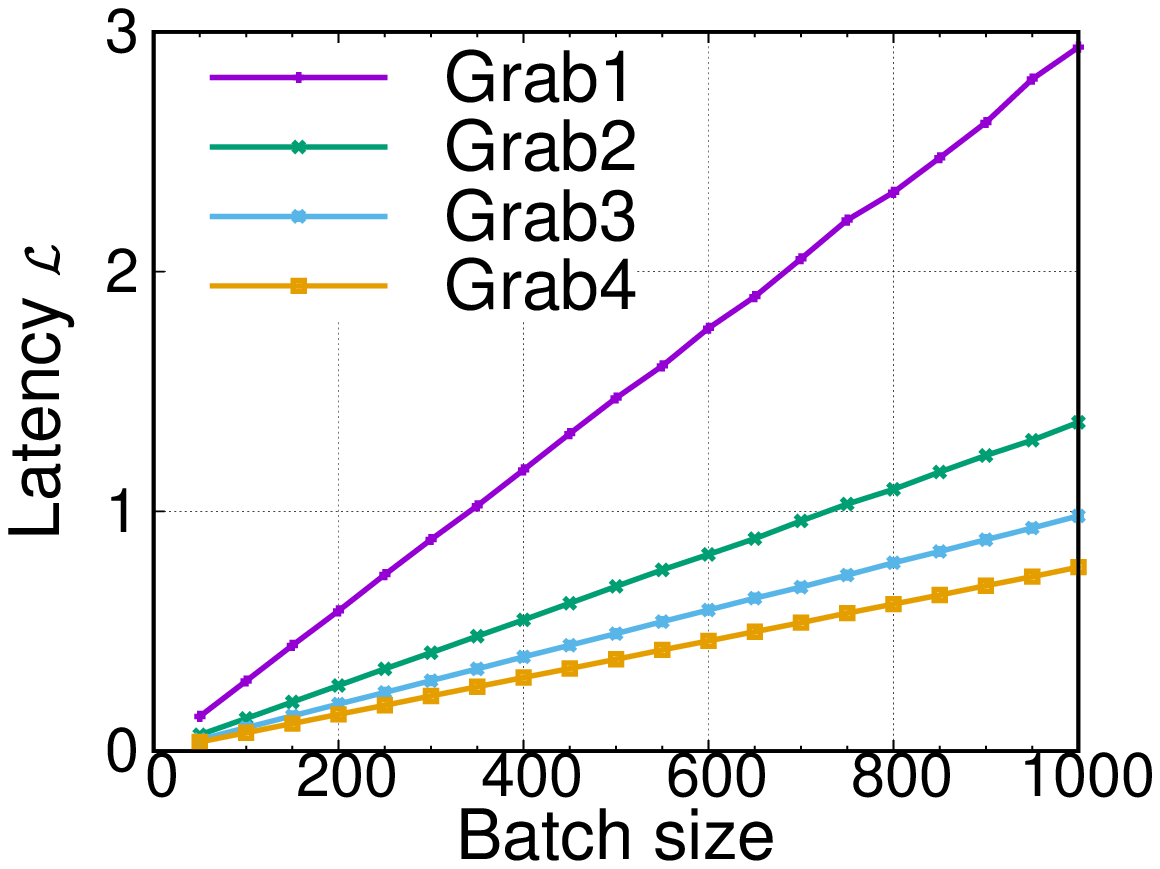}}
\end{minipage}
\hfill\hfill
\begin{minipage}[t]{.14\textwidth}
    \subcaptionbox{\IncDENGW{}}{\includegraphics[width=1.2\linewidth]{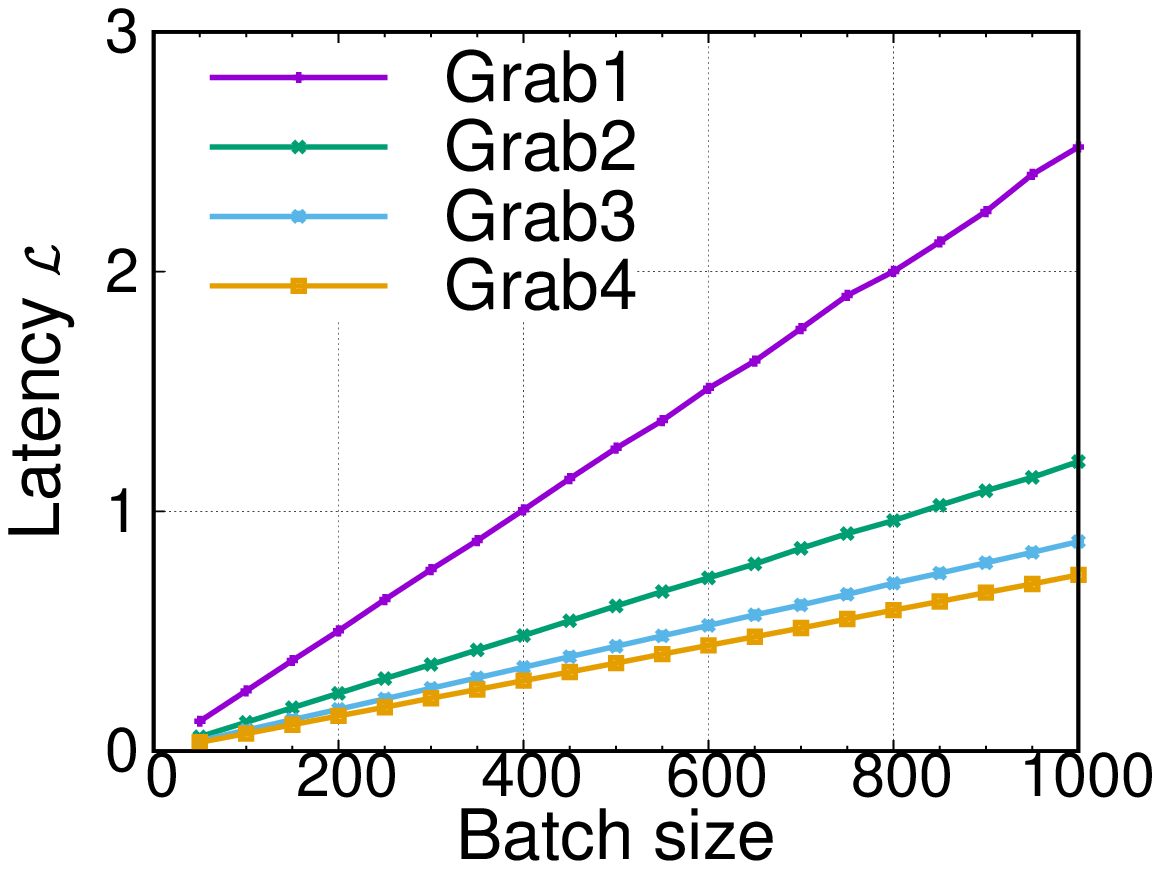}}
\end{minipage}
\hfill\hfill
\begin{minipage}[t]{.14\textwidth}
    \subcaptionbox{\IncFraudar}{\includegraphics[width=1.2\linewidth]{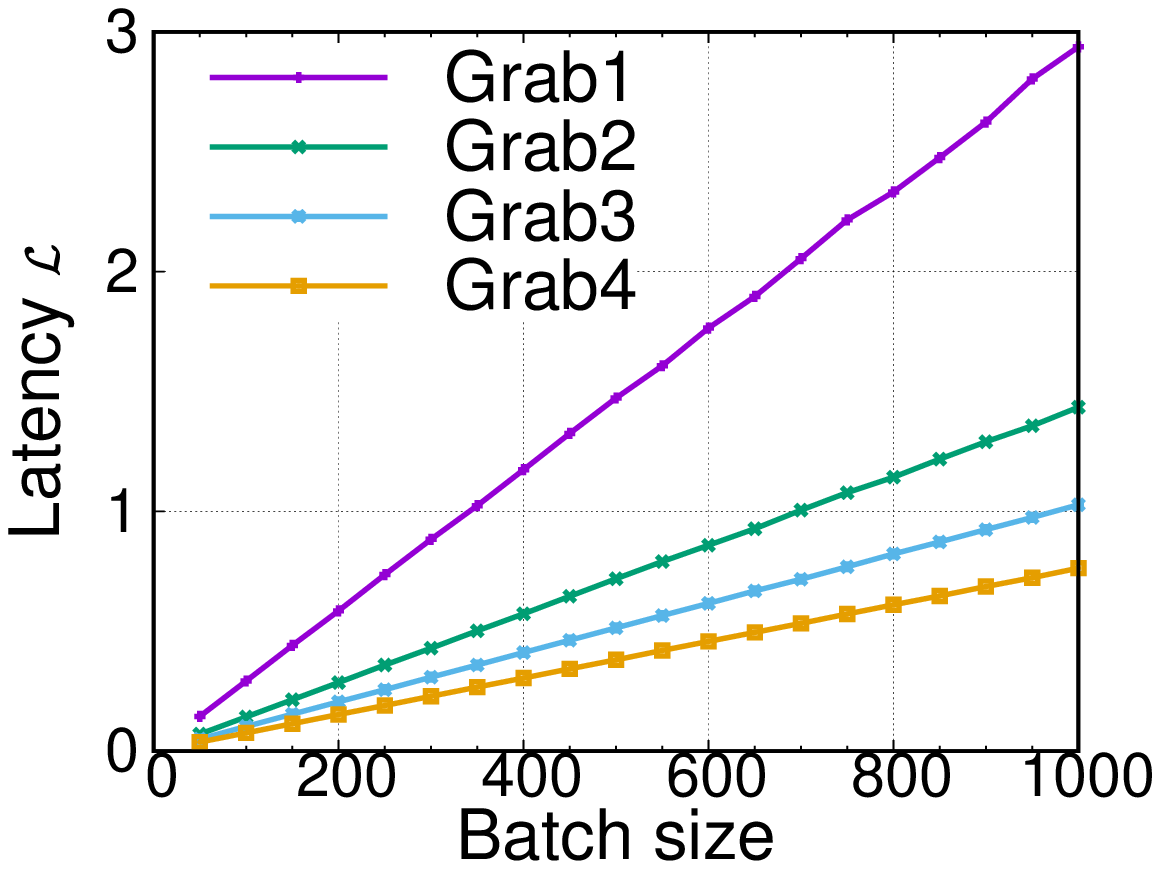}}
\end{minipage}
    \caption{\jiaxin{Elapsed time and latency by varying batch sizes}}\label{fig:tplatency}
\end{figure}

\subsection{Effectiveness of \Spade{}}\label{sec:effectiveness}

\eat{Next, we demonstrate two metrics to measure the effectiveness of \Spade{}. First, fraudsters should be detected as soon as possible to avoid economic loss, \ie the latency $\latency$ of response to fraudulent activities should be reduced (defined by Equation~\ref{eq:latency}).  Second, once we identify a fraudster, we ban his/her following transactions to prevent economic loss. We use $\Ratio$ to denote the ratio of suspicious transactions that are prevented to all suspicious transactions.}

\stitle{Latency.} Our experiment reveals that when the batch size increases, the latency of the batch peeling sequence increases (shown in Figure~\ref{fig:tplatency}). For example, the latency of \IncDENG{} (resp. \IncDENGW{} and \IncFraudar{}) is $0.76$ (resp. $0.74$ and $0.76$). We remarked that $99.99\%$ of the latency of \IncDENG{}, \IncDENGW{} and \IncFraudar{} is the queueing time, \ie \Spade{} accumulates enough transactions and processes them together. Furthermore, the latency in Grab1 is higher than that in Grab4. For example, the latency of \IncFraudar{} in Grab1 (resp. Grab4) is $2.93$ (resp. $0.76$). This is because the queueing time on Grab1 is longer than that on Grab4. 

\stitle{Prevention ratio.} As shown in Figure~\ref{fig:prevent}, the prevention ratio continues to decrease as latency increases on $\Grab$'s datasets. Our results show that \IncDENGU{} (resp. \IncDENGWU{} and \IncFraudarU{}) can prevent $88.34\%$ (resp. $86.53\%$ and $92.47\%$) of fraudulent activities. \IncDENG-$1K$ (resp. \IncDENGW-$1K$ and \IncFraudar-$1K$) can prevent $28.6\%$ (resp. $41.18\%$ and $92.47\%$) of fraudulent activities by excluding queueing time.

\stitle{Case studies.} We next present the effectiveness of \Spade{} in discovering meaningful fraud through case studies in the datasets of $\Grab$. There are three popular fraud patterns as shown in Figure~\ref{fig:casestudy}. First, \textit{customer-merchant collusion} is the customer and the merchant performing fictitious transactions to use the opportunity of promotion activities to earn the bonus (Figure~\ref{fig:casestudy}(a)). Second, there is a group of users who take advantage of promotions or merchant bugs, called \textit{deal-hunter} (Figure~\ref{fig:casestudy}(b)). Third, some merchants recruit fraudsters to create false prosperity by performing fictitious transactions, called \textit{click-farming} (Figure~\ref{fig:casestudy}(c)). All three cases form a dense subgraph in a short period of time.

We investigate the details of the customer-mercant collusion in Figure~\ref{fig:casestudy}(d). \IncDENG{} and $\DENG$ start both at $T_0$. Under the semantic of $\DENG$, the user becomes a fraudster at $T_1$ (one second after $T_0$). \IncDENG{} spots the fraudster at $T_1$ with negligible delay. However, $\DENG$ cannot detect this fraud at $T_1$, as it is still evaluating the graph snapshot at $T_0$. By $\DENG$, this fraudster will be detected after the second round detection of $\DENG$ at $T_2$ (about 60 seconds after $T_0$). During the time period $[T_1,T_2]$, there are $720$ potential fraudulent transactions generated. Similar observations are made in the other two cases. \jiaxin{Due to space limitations, they are presented in Appendix~\ref{sec:casestudy} of \cite{techreport}.}

\begin{figure}[tb]
    \includegraphics[width=0.9\linewidth]{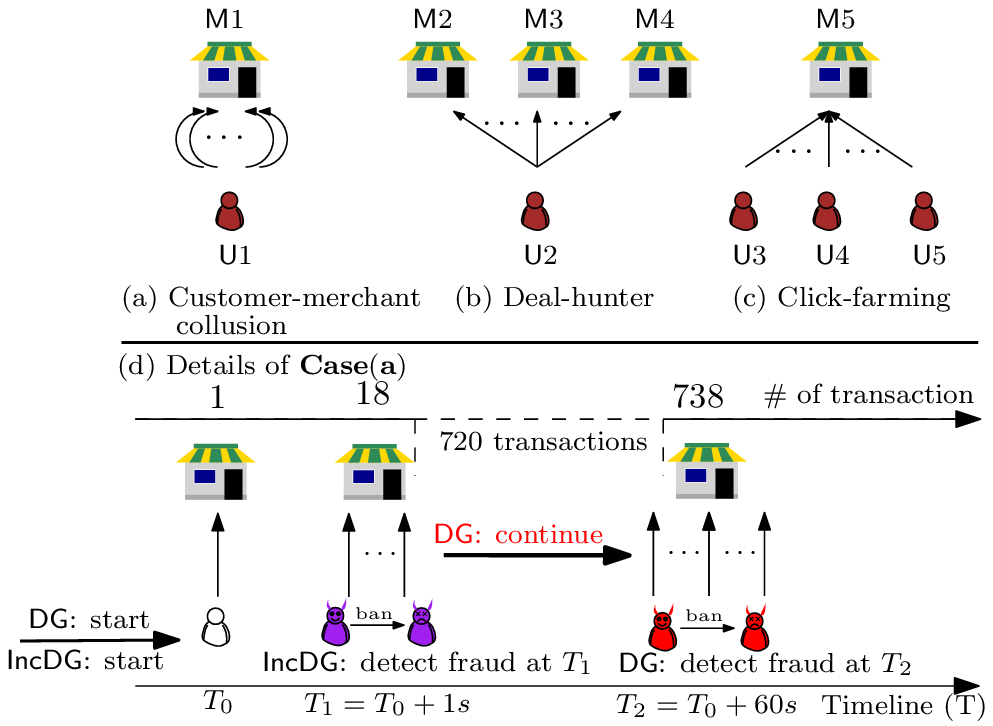}
    \caption{Case study: three fraud patterns}\label{fig:casestudy}
\end{figure}

\section{Related work}\label{sec:related}

\stitle{Dense subgraph mining.} A series of studies have utilized dense subgraph mining to detect fraud, spam, or communities on social networks and review networks~\cite{hooi2016fraudar,shin2016corescope,ren2021ensemfdet}. However, they are proposed for static graphs. Some variants \cite{bahmani2012densest,epasto2015efficient} are designed to detect dense subgraphs in dynamic graphs. \cite{shin2017densealert} is proposed to spot generally dense subtensors created in a short period of time. Unlike these studies, \Spade{} detects the fraudsters on both weighted and unweighted graphs in real time. Moreover, we propose an edge grouping technique which distinguishes potential fraudulent transactions from benign transactions and enables incremental maintenance in batch. 

\stitle{Graph clustering.} A common practice is to employ graph clustering that divides a large graph into smaller partitions for fraud detection. DBSCAN~\cite{gan2015dbscan,ester1996density} and its variant hdbscan ~\cite{mcinnes2017hdbscan} use local search heuristics to detect dense clusters. K-Means~\cite{duda1973pattern} is a clustering method of vector quantization. \cite{yamanishi2004line} detects medical insurance fraud by recognizing outliers. Unlike these studies, \Spade{} is robust with worst-case guarantees in search results. Moreover, \Spade{} provides simple but expressive APIs for developers, which allows their peeling algorithms to be incremental in nature on evolving graphs.

\stitle{Fraud detection using graph techniques.} COPYCATCH~\cite{beutel2013copycatch} and GETTHESCOOP~\cite{jiang2014inferring} use local search heuristics to detect dense subgraphs on bipartite graphs. Label propagation~\cite{wang2015community} is an efficient and effective method of detecting community.  ~\cite{cortes2003computational} explores link analysis to detect fraud. \cite{wang2021deep} and \cite{dou2020enhancing} explore the GNN to detect fraud on the graph. Unlike these studies, \Spade{} detects fraud in real-time and supports evolving graphs.


\section{Conclusion}\label{sec:conclusion}

In this paper, we propose a real-time fraud detection framework called \Spade. We propose three fundamental peeling sequence reordering techniques to avoid detecting fraudulent communities from scratch. \Spade{} enables popular peeling algorithms to be incremental in nature and improves their efficiency. Our experiments show that \Spade{} speeds up fraud detection up to $6$ orders of magnitude and up to $88.34\%$ fraud activities can be prevented.

The results and case studies demonstrate that our algorithm is helpful to address the challenges in real-time fraud detection for the real problems in $\Grab$ but also goes beyond for other graph applications as shown in our datasets.

\begin{acks}
This work was funded by the Grab-NUS AI Lab, a joint collaboration between GrabTaxi Holdings Pte. Ltd. and National University of Singapore. We thank the reviewers for their valuable feedback.
\end{acks}

\newpage

\bibliographystyle{abbrv}
\bibliography{ref}

\newpage
\appendix

\section{Proofs of lemmas}\label{sec:proof}

In this section, we provide all the formal proofs in Section~\ref{sec:Spade} of the main paper.

\begin{manuallemma}{\ref{lemma:seq}} 
$\Seq'[1:i-1] = \Seq[1:i-1]$.
\end{manuallemma}

\begin{proof}
    $\forall k\in [1,i-1]$, $w_{u_i}(S_k)$ and $w_{u_j}(S_k)$ increase by $\Delta$. Therefore, $w_{u_k}(S_k)$ is still the smallest among $S_k$. Hence, $u_k$ will be removed at $k$-th iteration. By induction, $\Seq'[1:i-1] = \Seq[1:i-1]$.
\end{proof}

\begin{lemma}\label{lemma:subset}
    If $S_i \subseteq S_j$ and $u_k\in S_i$, $w_{u_k}(S_j) \geq w_{u_k}(S_i)$.
\end{lemma}

\begin{proof}
By definition, we have the following.
\begin{equation}
\footnotesize
\begin{split}
    w_{u_k}(S_j) & = a_k + \sum_{(u_j\in S_j) \bigwedge ((u_k,u_j)\in E)} c_{kj} + \sum_{(u_j\in S_j) \bigwedge ((u_j,u_k)\in E)} c_{jk} \\
    & = w_{u_k}(S_i) + \sum_{(u_j\in S_j\setminus S_i) \bigwedge ((u_k,u_j)\in E)} c_{kj} + \sum_{(u_j\in S_j\setminus S_i) \bigwedge ((u_j,u_k)\in E)} c_{jk}
\end{split}
\end{equation}
Since the weights on the edges are nonnegative, $w_{u_k}(S_j) > w_{u_k}(S_i)$.
\end{proof}

\begin{manuallemma}{\ref{lemma:peel}}
    If $\Delta_k > \Delta_{\min}$, $u_{\min} = \mathop{\arg\min}\limits_{u\in T\cup S_k}w_{u}(T\cup S_k)$.
\end{manuallemma}

\begin{proof}
    Consider a vertex $u'\in T\cup S_k$, where $u'\not =u_k$ or $u'\not = u_{\min}$. 1) If $u'\in S_k$, due to Lemma~\ref{lemma:subset}, $w_{u'}(T\cup S_k) > w_{u'}(S_k) > w_{u_k}(S_k) \geq w_{u_k}(T\cup S_k) = \Delta_k >\Delta_{\min}$. 2) If $u'\in T$, $w_{u'}(T\cup S_k) > w_{u_{\min}}(T\cup S_k) = \Delta_{\min}$. Hence, $u'$ is not the vertex that has the smallest peeling weight. Therefore, $u_{\min}$ has the smallest peeling weight.
\end{proof}

\begin{lemma}\label{lemma:opt}
    If $\exists u\in S$, such that $w_{u}(S) < g(S^*)$, then $S\not = S^{*}$. 
\end{lemma}

\begin{proof}
    We prove it in contradiction by assuming that $S = S^{*}$. By peeling $u$ from $S$, we have the following.
    \begin{equation}
    \footnotesize
        \begin{split}
            g(S^{*}\setminus \{u\}) & = \frac{f(S^{*}) - w_{u}(S^{*})}{|S^{*}|-1} > \frac{f(S^{*}) - g(S)}{|S^{*}|-1} \\
            & = \frac{f(S^{*}) - g(S^*)}{|S^{*}|-1} = \frac{f(S^{*}) - \frac{f(S^{*})}{|S^{*}|}}{|S^{*}|-1} = g(S^*)
        \end{split}       
    \end{equation}
A better solution can be obtained by peeling $u_i$ from $S^*$. This contradicts the notion that $S^*$ is the optimal solution. Hence, $S_i\not = S^{*}$.
\end{proof}

\begin{manuallemma}{\ref{lemma:optimalset}}
Given an edge $e = (u_i,u_j)$, if $e$ is a benign edge, after the insertion of $e$, $u_i\not \in S^{*}$ and $u_j\not \in S^{*}$.
\end{manuallemma}

\begin{proof}
    We prove this lemma in contradiction by assuming that $u_i\in S^{*}$. $w_{u_i}(S^*) \leq w_{u_i}(S_0) + c_{ij} < g(S^P) \leq g(S^*)$. We have $S^*\not = S^*$ due to Lemma~\ref{lemma:opt}. We can conclude that $u_i\not\in S^*$. Similarly, $u_j\not \in S^*$.
\end{proof}

\begin{lemma}\label{lemma:opt2}
If $\exists u\in S_i$, $w_{u}(S_i) < g(S_i)$, then $S_i\not = S^{P}$.
\end{lemma}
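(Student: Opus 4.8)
The plan is to argue by contradiction, reusing the structure already established for Lemma~\ref{lemma:opt} but adapting it from the global optimum $S^*$ to the peeling-sequence optimum $S^P$. Suppose, for contradiction, that $S_i = S^P$. The hypothesis supplies a vertex $u \in S_i$ with $w_{u}(S_i) < g(S_i)$, and the goal is to exhibit a set that the peeling algorithm actually produces which is strictly denser than $S^P$, contradicting the defining property $S^P = \arg\max_{S_j} g(S_j)$.

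The subtle point, and the reason the proof does not simply copy Lemma~\ref{lemma:opt}, is that $S^P$ is maximal only over the sequence $S_0,\dots,S_{|V|}$, not over all subsets of $V$. Hence I would \emph{not} peel $u$ itself, but instead appeal to the vertex $u'$ that Algorithm~\ref{algo:peeling} genuinely removes from $S_i$, i.e. the vertex of smallest peeling weight in $S_i$, so that $S_i\setminus\{u'\} = S_{i+1}$ really is a member of the peeling sequence. By the selection rule of the algorithm we have $w_{u'}(S_i) \le w_{u}(S_i)$, which is exactly what lets us push the hypothesis through to the peeled vertex.

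The core step is then the short density computation
\begin{equation*}
\footnotesize
g(S_{i+1}) = \frac{f(S_i) - w_{u'}(S_i)}{|S_i|-1} \ge \frac{f(S_i) - w_{u}(S_i)}{|S_i|-1} > \frac{f(S_i) - g(S_i)}{|S_i|-1} = g(S_i),
\end{equation*}
where the first inequality is $w_{u'}(S_i)\le w_u(S_i)$, the second is the hypothesis $w_u(S_i) < g(S_i)$, and the final equality is the identity $\frac{f(S_i) - f(S_i)/|S_i|}{|S_i|-1} = f(S_i)/|S_i|$ already exploited in Lemma~\ref{lemma:opt}. This gives $g(S_{i+1}) > g(S_i) = g(S^P)$, contradicting the maximality of $S^P$, so $S_i \neq S^P$.

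The main obstacle is conceptual rather than computational: one must compare $S^P$ against $S_{i+1}$ (a set the algorithm truly emits) rather than against the arbitrary set $S_i\setminus\{u\}$, which is where the argument genuinely differs from the $S^*$ case. The degenerate case $|S_i|=1$ requires only a one-line check and causes no trouble, since there $w_u(S_i)=a_i=g(S_i)$, so the strict hypothesis $w_u(S_i)<g(S_i)$ cannot hold and the division by $|S_i|-1$ is never reached.
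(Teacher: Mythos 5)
Your proposal is correct and follows essentially the same route as the paper's own proof: assume $S_i = S^P$, compare the hypothesized vertex $u$ against the vertex the algorithm actually peels (your $u'$, the paper's $u_i$), and use $w_{u'}(S_i)\le w_u(S_i) < g(S_i)$ to show $g(S_{i+1}) > g(S^P)$, contradicting the maximality of $S^P$ over the peeling sequence. Your version is in fact slightly cleaner: you correctly write the first comparison as $\geq$ (the paper writes a strict inequality there, which is inaccurate when $u'=u$), and you dispose of the degenerate case $|S_i|=1$, which the paper leaves implicit.
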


\begin{proof}
We prove this in contradiction by assuming that $S_i= S^P$. Suppose that $u_i$ is peeled from $S_i$. Hence, $w_{u_i}(S^P) \leq w_{u}(S^P)$ due to the peeling definition. The proof can be obtained as follows:
\begin{equation}
\footnotesize
    \begin{split}
        g(S^P \setminus \{u_i\}) & = \frac{f(S^{P}) - w_{u_i}(S^{P})}{|S^{P}|-1} > \frac{f(S^P) - w_{u}(S^P)}{|S^P|-1} > g(S^P)   
    \end{split}
\end{equation}
This contradicts the fact that $S^P$ has the highest density. We can conclude that $S_i \not = S^{P}$.
\end{proof}

\begin{manuallemma}{\ref{lemma:benign}}
Given a benign edge $e = (u_i,u_j)$ insertion, at least one of the following two conditions is established: 1) $u_i\not \in S^{P'}$ and $u_j\not \in S^{P'}$; and 2) $g(S^{P'}) < g(S^P)$.
\end{manuallemma}

\begin{proof}
    Without loss of generality, we assume $i\leq j$. We prove this in contradiction by assuming $g(S^{P'}) \geq g(S^P)$ and $u_i\in S^{P'}$ or $u_j\in S^{P'}$ after inserting the edge $e$.

    Due to Lemma~\ref{lemma:subset} and $S^{P'} \subseteq S_0$, we have

    \begin{equation}\label{eq:subset}
        w_{u_i}(S^{P'}) \leq w_{u_i}(S_0) < w_{u_i}(S_0) + c_{ij} < g(S^P) < g(S^{P'})
    \end{equation}

    Therefore, $S^{P'}$ is not the result returned by peeling algorithms due to Lemma~\ref{lemma:opt2} which contradicts that $S^{P'}$ maximizes $g$.

\end{proof}

\section{More case studies}\label{sec:casestudy}

\begin{figure}[tb]
    \includegraphics[width=\linewidth]{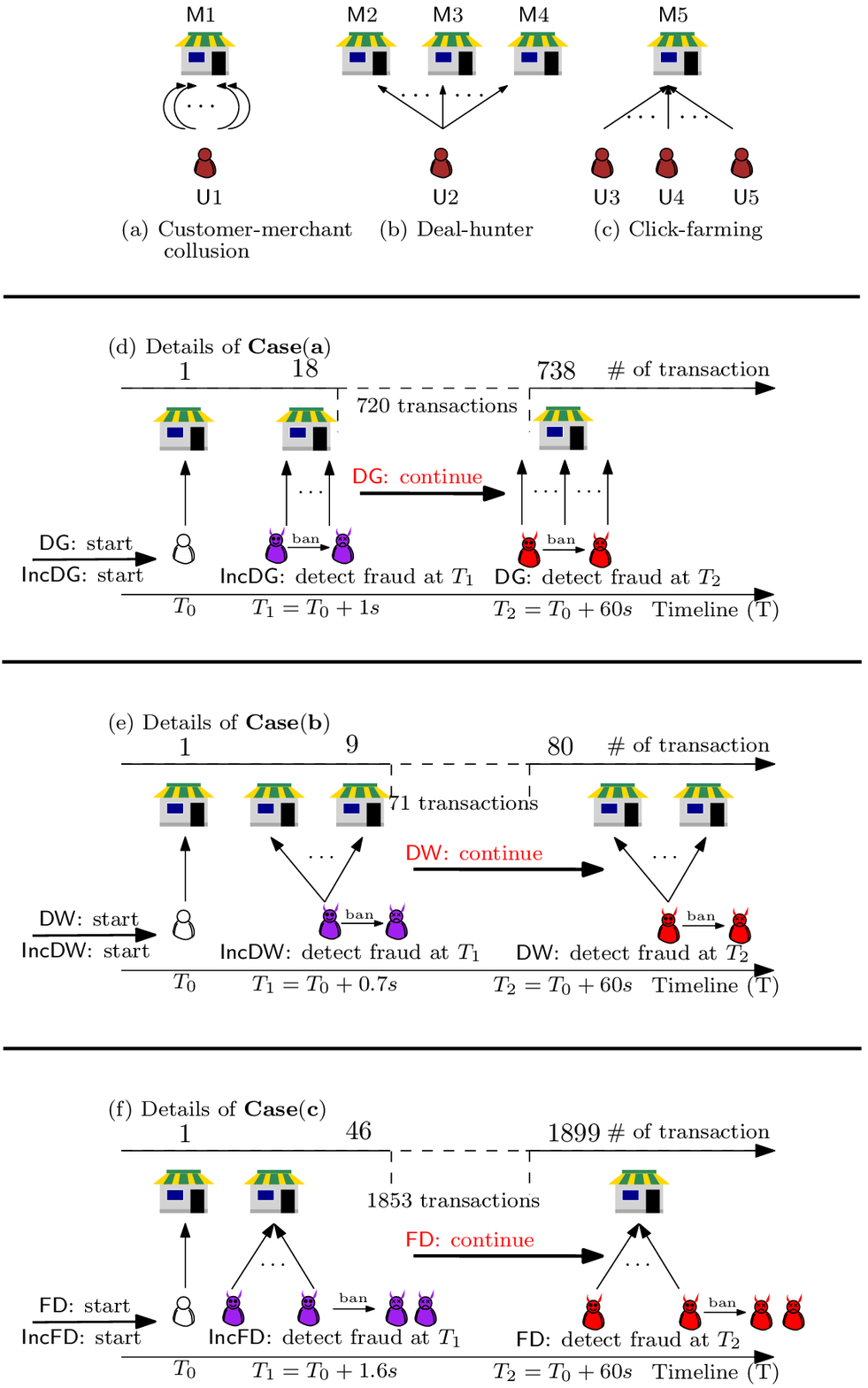}
    \caption{Case study: three fraud patterns}\label{fig:casestudyappendix}
\end{figure}

We next present the effectiveness of \Spade{} in discovering meaningful fraud through case studies in the datasets of $\Grab$. There are three popular fraud patterns as shown in Figure~\ref{fig:casestudyappendix}. First, \textit{customer-merchant collusion} is the customer and the merchant performing fictitious transactions to use the opportunity of promotion activities to earn the bonus (Figure~\ref{fig:casestudyappendix}(a)). Second, there is a group of users who take advantage of promotions or merchant bugs, called \textit{deal-hunter} (Figure~\ref{fig:casestudyappendix}(b)). Third, some merchants recruit fraudsters to create false prosperity by performing fictitious transactions, called \textit{click-farming} (Figure~\ref{fig:casestudyappendix}(c)). All three cases form a dense subgraph in a short period of time.

\etitle{Customer-merchant collusion.} We detail the customer-mercant collusion in Figure~\ref{fig:casestudyappendix}(d). \IncDENG{} and $\DENG$ start both at $T_0$. Under the semantic of $\DENG$, the user becomes a fraudster at $T_1$ (one second after $T_0$). \IncDENG{} spots the fraudster at $T_1$ with negligible delay. However, $\DENG$ cannot detect this fraud at $T_1$, as it is still evaluating the graph snapshot at $T_0$. By $\DENG$, this fraudster will be detected after the second round detection of $\DENG$ at $T_2$ (about 60 seconds after $T_0$). During the time period $[T_1,T_2]$, there are $720$ potential fraudulent transactions generated.

\etitle{Deal-hunter.} We investigate the details of customer-merchant collusion in Figure~\ref{fig:casestudyappendix}(e). \IncDENGW{} and $\DENGW$ start both at $T_0$. Under the semantic of $\DENGW$, the user becomes a fraudster at $T_1$ ($0.7$ second after $T_0$). \IncDENGW{} identifies the fraudster at $T_1$ with negligible delay. However, $\DENGW$ cannot detect this fraud at $T_1$, as it is still evaluating the graph snapshot at $T_0$. By $\DENGW$, this fraudster will be detected after the second round detection of $\DENGW$ at $T_2$ (about 60 seconds after $T_0$). During the time period $[T_1,T_2]$, there are $71$ potential fraudulent transactions generated.

\etitle{Click-farming.} Last but not least, we present the details of click-farming in Figure~\ref{fig:casestudyappendix}(f).  \IncFraudar{} and $\Fraudar$ start both at $T_0$. Under the semantic of $\Fraudar$, the group of users becomes fraudsters at $T_1$ ($1.6$ second after $T_0$). \IncFraudar{} spots the fraudsters at $T_1$ with negligible delay. However, $\Fraudar$ cannot detect this fraud at $T_1$, as it is still evaluating the graph snapshot at $T_0$. By $\Fraudar$, these fraudsters will be detected after the second round detection of $\Fraudar$ at $T_2$ (about 60 seconds after $T_0$). During the time period $[T_1,T_2]$, there are $1853$ potential fraudulent transactions generated.

\begin{figure}[tb]
    \includegraphics[width=0.65\linewidth]{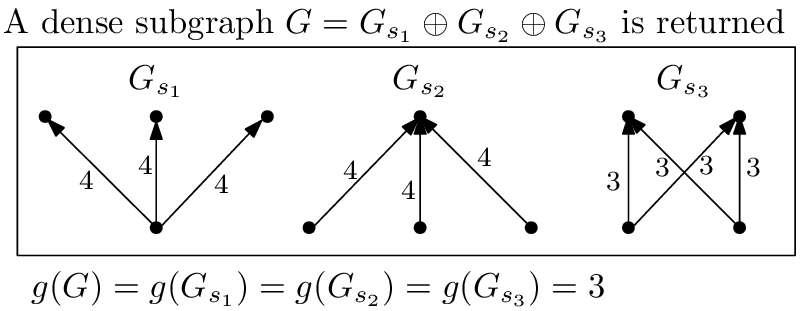}
    \caption{Multiple fraud instances}\label{fig:dense}
\end{figure}

\begin{figure*}[tb]
\includegraphics[width=1\linewidth]{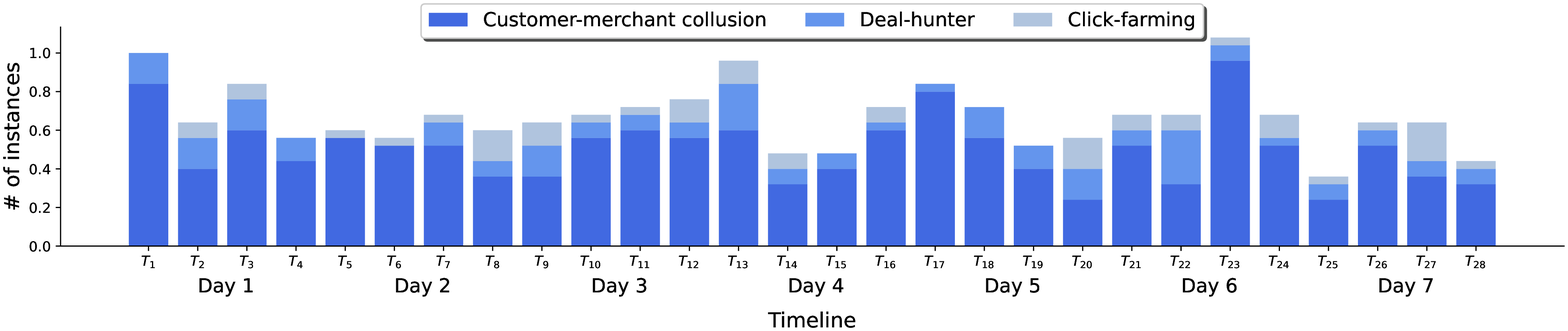}
    \caption{\jiaxin{\Spade{} spots and enumerates the new fraudsters. The appearances of dense subgraphs indicates various types frauds including customer-merchant collusion, deal-hunter and click-farming. We show that fraudulent instances are identified in a week. Each bar represents the number of fraudulent instances are detected in the corresponding timespan. The numbers are normalized to the number of fraudulent instances during the first timespan.}}\label{fig:spot}
\end{figure*}

\noindent\jiaxin{Consider a dense subgraph $G$, it could consists of multiple fraud instances as shown in Figure~\ref{fig:dense}. $G$ consists of $G_{s_1}$, $G_{s_2}$ and $G_{s_3}$ and all of their densities are equal to $3$. Therefore, all will be returned, since they commonly form a dense subgraph $G$. We enumerate these instances once new fraudsters are identified.}

\noindent\jiaxin{\stitle{Fraud enumeration.} Figure~\ref{fig:spot} depicts the new fraudsters identified by \Spade{} in $28$ timespans. Once new fraudsters are detected, \Spade{} enumerates them and reports them to the moderators. In Figure~\ref{fig:spot}, each bar represents the number of fraudulent instances are detected in the corresponding timespan. We investigated the detected fraudsters and found that most of their transactions corresponded to actual fraud, including customer-merchant collusion, deal-hunter and click-farming.}

\eat{, $\mathsf{M}70980$ is highly connected to several fraudsters, such as $\mathsf{U}30592$, $\mathsf{U}7036$, $\mathsf{U}90618$, etc. With \Spade{}, \IncFraudar{} detects $\mathsf{M}70980$ in the $46$th transaction, while $\Fraudar$ detects him/her in the $1899$th transaction.}

\section{Future extensions}\label{sec:extension}

We discuss a few possible extensions of our current system, including edge deletion, enumeration and fraud detection within a given period of time. 

\subsection{Peeling sequence reordering with edge deletion}\label{sec:extension:deletion}

The company will delete some outdated transactions since they are not of much value for fraud detection in some operational demands, \eg some transactions generated several years ago. Given such an operational demand, we consider the extension of incremental maintenance with edge deletion of $(u_i,u_j)$ (without loss of generality, we assume $i < j$). A straightforward solution is also to reorder the peeling sequence. We summarize the key steps as follows and leave the extension details of \Spade{} in future work.

\stitle{Incremental algorithm (\TCald).} \TCald{} initializes an empty vector for the updated peeling sequence $\Seq'$. \Spade{} maintains a pending queue $T$ to store the vertices pending reordering. We iteratively compare 1) the head of $T$, denoted by $u_{\min}$ and 2) the vertex $u_k$ in the peeling sequence $\Seq$, where $k < i$. The corresponding peeling weights are denoted by $\Delta_{\min}$ and $\Delta_k$. We consider the following two cases.

\stitle{Case 1.} If the peeling weight $w_{u_k}(S_0) > \Delta_{\min}$, we insert $u_k$ into $T$ and update the priorities in $T$ for the neighbors of $u_k$, $N(u_{\max})$, $k=k-1$.

\stitle{Case 2.} If the peeling weight $w_{u_k}(S_0) \leq \Delta_{\min}$, we append $\Seq[1:k]$ to $\Seq'[1:k]$.

\begin{figure*}[tb]
\includegraphics[width=0.85\linewidth]{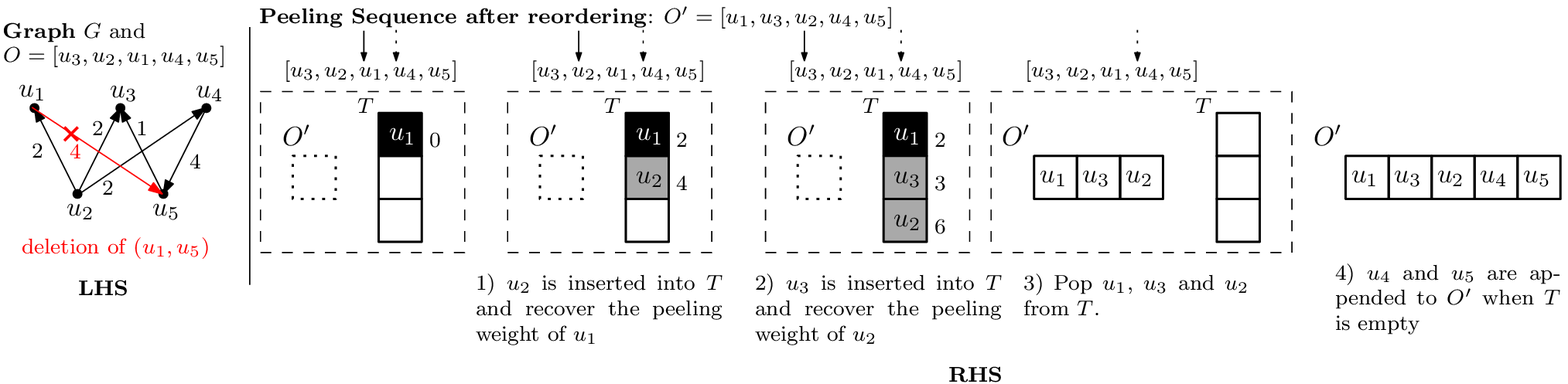}
    \caption{Peeling sequence reordering  with edge deletion (A running example)}\label{fig:run:deleted}
\end{figure*}

While $T$ is non-empty, we iteratively compare 1) the head of $T$ and 2) the vertex $u_k$ in the peeling sequence $\Seq$, where $k \geq i + 1$. The incremental maintenance is identical to that of edge insertion in Section~\ref{sec:edgebyedge}. Specifically, we consider the following three cases:

\stitle{Case 1.} If $\Delta_{\min} < \Delta_k $, we pop the $u_{\min}$ from $T$ and insert it to $\Seq'$. Then we update the priorities in $T$ for the neighbors of $u_{\min}$, $N(u_{\min})$.

\stitle{Case 2.} If $\Delta_{\min} \geq \Delta_{k}$ and $\exists u_{T}\in T, (u_{T},u_{k})\in E$ or $(u_{k},u_{T})\in E$, we insert $u_k$ into $T$. The peeling weight is $w_{u_k}(T\cup S_k)$ $ = \Delta_k $ $ + $ $\sum_{(u_T\in T) \bigwedge ((u_T,u_k)\in E)}$ $ c_{Tk} + $ $\sum_{(u_T\in T) \bigwedge ((u_k,u_T)\in E)} c_{kT}$, $k=k+1$.

\stitle{Case 3.} If $\Delta_{\min} \geq \Delta_{k}$ and $\forall u_{T}\in T, (u_{T},u_{k})\not\in E$ and $(u_{k},u_{T})\not\in E$, we insert $u_{k}$ to $\Seq'$, $k=k+1$.

\setlist{nolistsep}
\eat{
\begin{itemize}[leftmargin=*]
    \item \stitle{Case 1.} If $\Delta_{\min} < \Delta_k $, we pop the $u_{\min}$ from $T$ and insert it to $\Seq'$. Then we update the priorities in $T$ for the neighbors of $u_{\min}$, $N(u_{\min})$.
    \item \stitle{Case 2.} If $\Delta_{\min} \geq \Delta_{k}$ and $\exists u_{T}\in T, (u_{T},u_{k})\in E$ or $(u_{k},u_{T})\in E$, we insert $u_k$ into $T$. The peeling weight is $w_{u_k}(T\cup S_k)$ $ = \Delta_k $ $ + $ $\sum_{(u_T\in T) \bigwedge ((u_T,u_k)\in E)}$ $ c_{Tk} + $ $\sum_{(u_T\in T) \bigwedge ((u_k,u_T)\in E)} c_{kT}$, $k=k+1$.
    \item \stitle{Case 3.} If $\Delta_{\min} \geq \Delta_{k}$ and $\forall u_{T}\in T, (u_{T},u_{k})\not\in E$ and $(u_{k},u_{T})\not\in E$, we insert $u_{k}$ to $\Seq'$, $k=k+1$.
\end{itemize}
}

We repeat the above iteration until $T$ is empty.

\begin{example}
    Consider the graph $G$ in Figure~\ref{fig:run:deleted} and its peeling sequence $\Seq = [u_3, u_2, u_1, u_4, u_5]$. Suppose that an outdated edge $(u_1,u_5)$ is deleted from $G$ as shown in the LHS of Figure~\ref{fig:run:deleted}. The reordering procedure is presented in the RHS of Figure~\ref{fig:run:deleted}. $u_1$ is pushed to the pending queue $T$. Since the peeling weights $w_{u_2}(S_0)$ and $w_{u_3}(S_0)$ are larger than the peeling weight of $u_1$. $u_2$ and $u_3$ are inserted into $T$. Since the peeling weight of $u_1$ is less than that of $u_4$, it will be appended to $\Seq'$. Similarly $u_3$ and $u_2$ are appended to $\Seq'$ accordingly. Once $T$ is empty, the rest of the vertices, $u_4$ and $u_5$, in $\Seq$ are appended to $\Seq'$ directly. Therefore, the reordered peeling sequence is $\Seq'=[u_1,u_3,u_2,u_4,u_5]$.
\end{example}

\subsection{Dense subgraph enumeration}

In case of the enumeration of dense subgraphs due to some operational demands, we consider both static graphs and dynamic graphs.

\stitle{Static graphs.} Given a graph $G=(V,E)$, peeling algorithm $Q$ returns $S^P$. To enumerate dense subgraphs, we can perform the peeling algorithm $Q$ by removing $S^P$ from $G$, denoted by $G'=(V', E')$. Specifically, $V' = V\setminus S^P$ and $E' = E\setminus E^P$, where $\forall (u_i,u_j) \in E^P, u_i\in S^P$ or $u_j\in S^P$. Therefore, ${S^P}'$ will be returned as the second densest subgraph. We can perform the peeling algorithm $Q$ recursively to enumerate all dense subgraphs.

It is remarkable that we do not have to compute ${S^P}'$ from scratch. Instead, we can perform the incremental maintenance of edge deletion as introduced in Section~\ref{sec:extension:deletion}. 

\stitle{Dynamic graphs.} Given a graph $G$ and graph updates $\Delta G = (\Delta V, \Delta E)$, a straightforward solution is to reorder the peeling sequence by Algorithm~\ref{algo:batch} first. For the enumeration, we can think of this dynamic graph $G \oplus \Delta G$ as a static graph.

\subsection{Fraud detection during some time period}

Given a graph $G=(V,E)$ generated during a timespan $[\tau_s, \tau_e]$ ($\tau_s < \tau_e$) and the peeling sequence $\Seq = Q(G)$. Taking a new graph $G'=(V',E')$ generated during a timespan $[\tau_{s'},\tau_{e'}]$, we would like to identify the peeling sequence on $G'$, \ie $\Seq' = Q(G')$. To simply our discussion, we denote a set of edges generated during timespan $[\tau_s, \tau_e]$ by $E_{[s,e]}$

\stitle{Case 1.} If $\tau_{e'} < \tau_s$ or $\tau_e < \tau_{s'}$, $G$ and $G'$ do not overlap. Therefore, we directly apply the peeling algorithm $Q$ on $G'$.

\stitle{Case 2.} If $\tau_{s'} < \tau_s$ and $\tau_e < \tau_{e'}$, we perform Algorithm~\ref{algo:batch} by inserting two sets of edges, $E_{[s',s]}$ and $E_{[e,e']}$ to $G$. Then we can identify the peeling sequence $\Seq'$ on $G'$.

\stitle{Case 3.} If $\tau_s < \tau_{s'}$ and $\tau_{e'} < \tau_e$, we perform incremental maintenance in Section~\ref{sec:extension:deletion} by deleting two sets of edges, $E_{[s,s']}$ and $E_{[e',e]}$ from $G$. Then we can identify the peeling sequence $\Seq'$ on $G'$.

\stitle{Case 4.} If $\tau_{s'} < \tau_{s} < \tau_{e'} < \tau_e$, we perform Algorithm~\ref{algo:batch} by inserting a set of edges, $E_{[s',s]}$ to $G$ and perform incremental maintenance in Section~\ref{sec:extension:deletion} by deleting a set of edges $E_{[e',e]}$ from $G$.

\stitle{Case 5.} If $\tau_{s} < \tau_{s'} < \tau_{e} < \tau_{e'}$, we perform Algorithm~\ref{algo:batch} by inserting a set of edges, $E_{[e,e']}$ to $G$ and perform incremental maintenance in Section~\ref{sec:extension:deletion} by deleting a set of edges $E_{[s,s']}$ from $G$.

\begin{figure}[tb]
    \includegraphics[width=0.95\linewidth]{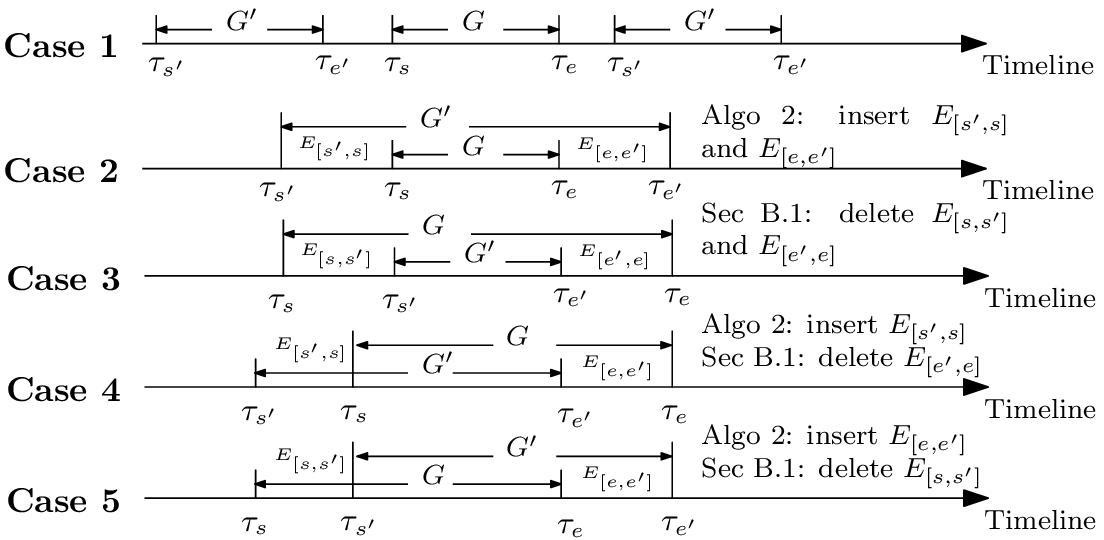}
    \caption{Fraud detection during some time period}\label{fig:timeperiod}
\end{figure}

\section{Accuracy guarantee of Algorithm~\ref{algo:batch}}\label{sec:correctness}

\stitle{Correctness and accuracy guarantee.} In \textbf{Case 1}, if $\Delta_k > \Delta_{\min}$, $u_{\min}$ is chosen to insert to $\Seq'$ since it has the smallest peeling weight due to Lemma~\ref{lemma:peel}. In \textbf{Case 2(b)}, $\Delta_k$ is the smallest peeling weight and $u_k$ is chosen to insert to $\Seq'$. The peeling sequence is identical to that of $G\oplus \Delta G$, since in each iteration the vertex with the smallest peeling weight is chosen. The accuracy of the worst-case is preserved due to Lemma~\ref{lemma:2ppr}.

\section{Properties of density metrics}\label{sec:axioms}

\newtheorem{axiom}{Axiom}
\stitle{Density metrics $g$.} We adopt the class of metrics $g$ in previous studies~\cite{hooi2016fraudar,gudapati2021search,charikar2000greedy}, $g(S) = \frac{f(S)}{|S|}$, where $f$ is the total weight of $G[S]$, \ie the sum of the weight of $S$ and $E[S]$:

\begin{equation}\label{eq:density}
    f(S)=\sum_{u_i\in S} a_i + \sum_{u_i,u_j\in S \bigwedge (u_i,u_j)\in E} c_{ij}
\end{equation}

We use $f_E(S)$ to denote the total suspiciousness of the edges $E[S]$ and $f_V(S)$ to denote the total suspiciousness of $S$, \ie

\begin{equation}
    f_V(S)= \sum_{u_i\in S} a_i 
\end{equation}

and 

\begin{equation}
    f_E(S)= \sum_{u_i,u_j\in S \bigwedge (u_i,u_j)\in E} c_{ij}
\end{equation}

The density metric defined in Equation~\ref{eq:density} satisfies Axiom~\ref{ax:first}-\ref{ax:fourth}. We adapted these basic properties from \cite{jiang2015general}.

\begin{axiom}
\label{ax:first}
  \stitle{[Vertex suspiciousness]} If 1) $|S| = |S'|$, 2) $f_E(S) = f_E(S')$, and 3) $f_V(S) > f_V(S')$, then $g(S) > g(S')$.
\end{axiom}

\begin{proof}
\begin{equation}
    g(S) = \frac{f_V(S) + f_E(S)}{|S|} > \frac{f_V(S') + f_E(S')}{|S'|} = g(S') 
\end{equation}    
\end{proof}

With slight abuse of definition, we use $g(S(V,E))$ to denote the total suspiciousness of $S$ on the graph $G=(V,E)$.

\begin{axiom}
\label{ax:second}
  \stitle{[Edge suspiciousness]} If $e=(u_i,u_j)\not\in E$, then $g(S(V,E\cup \{e\})) > g(S(V,E))$.
\end{axiom}
\begin{proof}
    \begin{equation}
        g(S(V,E\cup \{e\})) = \frac{f_V(S) + f_E(S) + c_{ij}}{|S|} > \frac{f_V(S) + f_E(S)}{|S|} = g(S)  
    \end{equation}
\end{proof}

\begin{axiom}
\label{ax:fourth}
  \stitle{[Concentration]} If $|S| < |S'|$ and $f(S) = f(S')$, then $g(S)>g(S')$.
\end{axiom}
\begin{proof}
    \begin{equation}
        g(S) = \frac{f(S)}{|S|} > \frac{f(S')}{|S'|} = g(S')
    \end{equation}
\end{proof}


\section{Instances of \Spade{}}\label{sec:instances}

We show that the popular peeling algorithms can be easily implemented and supported by \Spade{}, \eg $\DENG$~\cite{charikar2000greedy}, $\DENGW{}$~\cite{gudapati2021search} and $\Fraudar{}$~\cite{hooi2016fraudar}.

\etitle{Instance 1. Dense subgraphs ($\DENG{}$)~\cite{charikar2000greedy}.} $\DENG{}$ is designed to quantify the connectivity of substructures. It is widely used to detect fake comments~\cite{kumar2018community} and fraudulent activities~\cite{ban2018badlink} on social graphs. Let $S\subseteq V$. The density metric of $\DENG{}$ is defined by $g(S) = \frac{|E[S]|}{|S|}$. To implement $\DENG$ on \Spade{}, developers only need to design and plug in the suspiciousness function $\mathsf{esusp}$ by calling $\mathsf{ESusp}$. Specifically,  $\mathsf{esusp}$ is a constant function for edges, \ie $\mathsf{esusp}(u_i,u_j) = 1$.

\etitle{Instance 2. Dense weighted subgraphs ($\DENGW$)~\cite{gudapati2021search}.} On transaction graphs, there are weights on the edges in usual, such as the transaction amount. The density metric of $\DENGW{}$ is defined by $g(S) = \frac{\sum_{(u_i,u_j)\in E[S]}c_{ij}}{|S|}$, where $c_{ij}$ is the weight of the edge $(u_i,u_j)\in E$. To implement $\DENGW{}$, users only need to plug in the suspiciousness function $\mathsf{esusp}$, \ie given an edge, $\mathsf{esusp}(u_i,u_j) = c_{ij}$.

\etitle{Instance 3. Fraudar ($\Fraudar{}$)~\cite{hooi2016fraudar}.} To resist the camouflage of fraudsters, Hooi et al. ~\cite{hooi2016fraudar} proposed $\Fraudar{}$ to weight edges and set the prior suspiciousness of each vertex with side information. Let $S\subseteq V$. The density metric of $\Fraudar$ is defined as follows:

\begin{equation}
    g(S) = \frac{f(S)}{|S|} = \frac{\sum_{u_i\in S} a_i + \sum_{u_i,u_j\in S \bigwedge (u_i,u_j)\in E} c_{i,j}}{|S|}
\end{equation}

\begin{center}
  \scriptsize
  \lstinputlisting[caption=Implementation of $\Fraudar{}$ on \Spade, label={lst:listing-fd}, language=C++]{fd.cpp}
\end{center}

To implement $\Fraudar$ on \Spade{}, users only need to plug in the suspiciousness function $\mathsf{vsusp}$ for the vertices by calling $\mathsf{VSusp}$ and the suspiciousness function $\mathsf{esusp}$ for the edges by calling $\mathsf{ESusp}$. Specifically, 1) $\mathsf{vsusp}$ is a constant function, \ie given a vertex $u$, $\mathsf{vsusp}(u) = a_i$ and 2) $\mathsf{esusp}$ is a logarithmic function such that given an edge $(u_i,u_j)$, $\mathsf{esusp}(u_i,u_j) = \frac{1}{\log (x+c)}$, where $x$ is the degree of the object vertex between $u_i$ and $u_j$, and $c$ is a small positive constant ~\cite{hooi2016fraudar}.

Developers can easily implement customized peeling algorithms with \Spade{}, which significantly reduces the engineering effort. For example, users write only about $20$ lines of code (compared to about $100$ lines in the original $\Fraudar$~\cite{hooi2016fraudar}) to implement $\Fraudar$ as shown in List~\ref{lst:listing-fd}. \Spade{} enables $\Fraudar{}$ to be incrmental by nature. Similar observations are made in $\DENG$ and $\DENGW$.

\end{document}